\newtheorem{thm}{Theorem}[section]
\newtheorem{lem}[thm]{Lemma}
\theoremstyle{definition}
\newtheorem{defn}[thm]{Definition}
\newtheorem{notation}{Notation}
\theoremstyle{remark}
\numberwithin{equation}{section}
\newcommand{\set}[1]{\left\{#1\right\}}
\newcommand{\lfl}{\mathcal L _{FL}}
\newcommand{\la}{\mathcal L ^2_A}
\newcommand{\AND}{\wedge}
\newcommand{\false}{\bot}
\newcommand{\true}{\top}
\newcommand{\OR}{\vee}
\newcommand{\vl}{\overline{VL}}
\renewcommand{\implies}{\supset}
\newcommand{\size}[1]{|#1|}
\newcommand{\pair}[1]{\langle #1 \rangle}
\newcommand{\same}{\equiv}
\renewcommand{\iff}{\leftrightarrow}
\newcommand{\proves}{\vdash}
\newcommand{\sob}{\Sigma_0^B}
\newcommand{\sqcnf}{\Sigma CNF(2)}
\newcommand{\fCenter}{\longrightarrow}
\newcommand{\satisfies}{\models}
\renewcommand{\implies}{\supset}
\newcommand{\intersect}{\cap}
\begin{document}

\title{Quantified Propositional Logspace
Reasoning}%
\author{Steven Perron}%
% ----------------------------------------------------------------
\maketitle

\abstract{In this paper, we develop a quantified propositional proof systems that corresponds to logarithmic-space reasoning. 
We begin by defining a class $\sqcnf$ of
quantified formulas that can be evaluated in
log space.  Then our new proof system $GL^*$ is defined as $G_1^*$ with cuts
restricted to $\sqcnf$ formulas and no cut formula
that is not quantifier free contains a free
variable that does not appear in the final formula.

To show that $GL^*$ is strong enough to capture log space
reasoning, we translate theorems of $VL$ into a family of
tautologies that have polynomial-size $GL^*$ proofs.
$VL$ is a theory of bounded arithmetic that is known
to correspond to logarithmic-space reasoning.  To do
the translation, we find an
appropriate axiomatization of $VL$, and put $VL$
proofs into a new normal form.

To show that $GL^*$ is not too strong, we prove the
soundness of $GL^*$ in such a way that it can be formalized
in $VL$.  This is done by giving a logarithmic-space algorithm that
witnesses $GL^*$ proofs.
}

\section{Introduction}

Recently there has been a significant amount of research looking into the
connection between computational complexity, bounded
arithmetic, and propositional proof complexity.  A recent
survey on this topic can be found at \cite{CN06}.  The idea
is that there is a hierarchy of complexity classes
$$AC^0 \subsetneq TC^0 \subseteq NC^1 \subseteq L \subseteq
NL \subseteq P.$$
The first class is the set of problems that can be solved by
uniform, polynomial-size, constant depth circuits.  This
class is important because it can be shown that PARITY
cannot be solved in $AC^0$.  In fact, problems that
involve counting cannot be solved in $AC^0$.  The second
class is $TC^0$.  This set of problems is the same as $AC^0$
except that $TC^0$ circuits can use counting gates.  The
class $NC^1$ is the set of problems that can be solved using
polynomial-size, logarithmic-depth circuits.  This class can
be thought of as the set of problems that can be solved very
quickly when work is done in parallel.  Evaluating
boolean formulas is complete for this class.  The class $L$
is the set of problems that can be solved in
logarithmic space on a Turing machine.  The class $NL$ is
the set of problems that can be solved in logarithmic space
on a non-deterministic Turing machine.  The reachability
problem for directed graphs is complete for this class.  The
sequence finishes with $P$, the set of problems that can be
solved in polynomial time on a deterministic Turing machine.
Except for the first inclusion, is it unknown if any of these
inclusions are proper.  

Each of these complexity classes has
a corresponding theory of arithmetic:
$V^0, VTC^0, VNC^1, VL, VNL,$ and $TV^0$, respectively.
Each of these theories can prove that the functions in their
corresponding complexity class are total.  As a consequence,
any information we can obtain about the theory tells us
something about the complexity class and vice versa.

There is also a connection with propositional proof
complexity.  Some of the theories mentioned above have a
corresponding propositional proof system.  As before,
information about the proof systems tells us about the
corresponding theory and complexity class.  In this paper,
we explore the proof systems.
The goal is to try to understand how the strength
of a proof system is affected by different restrictions.

Our focus will be on quantified propositional proof systems,
but, to explain our method, we will use quantifier-free
propositional proof systems.
Start with a Frege proof
system, sometimes called Hilbert Style Systems.  These systems are described in standard logic text
books.  A Frege proof is a series of propositional formulas
where each formula is an axiom or can be inferred from
previous formulas using one of the rules of
inference.
There are two common ways of restricting this proof system.
The first is to restrict all of the formulas in the proof.
For example, one definition of bounded-depth Frege is to
restrict every formula in the proof to formulas with a
constant depth.  This worked, but, if a proof system is
defined this way, then there are formulas that cannot be
proved simply because they are not allowed to appear in the
proof.  For example, bounded-depth Frege with formulas of
depth $d$ cannot prove any formula of depth $d+1$.
The other method is to restrict the formulas on which certain
rules can be applied.  This solves the problem of the first
method and led to other definitions of bounded-depth Frege.

In this paper, we will look at restricting the cut rule in
the tree-like sequent calculus for quantified propositional formulas.
This systems is known as $G^*$.  The cut rule derives
$\Gamma \fCenter \Delta$ from $A, \Gamma \fCenter \Delta$
and $\Gamma \fCenter \Delta, A$.  In $G^*$, $A$ can be any quantified propositional formulas.
The proof system $G_0^*$ is defined by restricting $A$ to quantifier-free formulas.  If we are
given a $G_0^*$ proof of a $\Sigma_1^q$ formula ($\exists \vec z B(\vec
z)$, where $B$ is quantifier-free), then we can find a
witness for existential quantifiers in this formula in uniform $NC^1$; moreover, this
problem is complete for this class.  The complexity
class $NC^1$ is the set of problems that can be solved by
polynomial-size, logarithmic-depth circuits with fan-in 2.
The interesting observation is that evaluating
quantifier-free formulas is also complete for $NC^1$.
It is also possible to connect $G_0^*$ to $NC^1$ indirectly
through bounded arithmetic.  There is a theory of arithmetic
$VNC^1$ that is known to correspond to $NC^1$ reasoning.
Given a $VNC^1$ proof of a bounded formula it is possible to translate the proof
into a family of polynomial-size $G_0^*$ proofs.  This tells
us that the reasoning power of $G_0^*$ is at least as strong
as that of $VNC^1$ \cite{CM04}.  In the other direction, $VNC^1$ can
prove that $G_0^*$ is sound when proving $\Sigma_1^q$
formulas.  This means that, when proving
$\Sigma_1^q$ formulas, the reasoning power of $G_0^*$ is
not stronger than that of $VNC^1$.  So we say that $G_0^*$
corresponds to $NC^1$ reasoning.

As well, if we restrict cut formulas to
constant-depth, quantifier-free formulas, we get a proof system
that corresponds to $AC^0$ reasoning.  The complexity class
$AC^0$ is the set of problems that can be solved by
polynomial-size, constant-depth circuits with unbounded
fan-in.  Again, evaluating constant-depth
formulas is complete for $AC^0$.  We should note we are
talking about the proofs of quantifier-free formulas.

This gives us two proof systems whose reasoning power
is the same as the complexity of evaluating their cut
formulas.  This raises the
question of whether or not this holds in general.
The quick answer is no.  A counter-example is $G_1^*$.
Evaluating $\Sigma_1^q$ formulas is complete for $NP$, but
the $\Sigma_1^q$ witnessing problem for $G_1^*$ is complete
for $P$ \cite{Krajicek95}.  Another counter-example is $GPV^*$, where cut
formulas are quantifier-free or formulas of the form $\exists x
[ x \iff A]$, where $A$ is a quantifier-free formula that does
not mention $x$.  Evaluating a cut formula for $GPV^*$ is
complete for $NC^1$, but the witnessing problem is complete
for $P$ \cite{Perron07}.

In this paper, we define a new proof system $GL^*$ that corresponds
to $L$ reasoning.  The complexity class $L$ is the set of
problems that can be solved on a Turing Machine with a
read-only input tape and a work tape where the space used on
the work tape is proportional to the logarithm of the size
of the input.
Our proof system $GL^*$ is defined by restricting cuts to
$\sqcnf$ formulas, a set of formulas for which the
evaluation problem is complete for $L$.  However, that is
not enough.  We also restrict the free variables that appear
in cut formulas with quantifiers to variables that appear free
in the final sequent.  We then prove this proof system
corresponds to $L$ reasoning by connecting it with a theory
of arithmetic that is known to correspond to $L$ reasoning.
  This definition is meant to demonstrate that
the strength of a proof system is not related to the
difficulty of evaluating a single cut formula in the proof,
but to the complexity of witnessing the eigenvariables in the proof.

In Section 2, we give definitions of the
important concepts.  In particular, we define
two-sorted computational complexity and bounded arithmetic.
As well, we define the standard proof systems and explain
the connection between proof systems and theories of bounded
arithmetic in more detail.  In Section 3, we define $GL^*$.
This includes the definition of the $\sqcnf$ formulas.  In
Section 4, we change the theory $VL$ and prove a normal-form
that is necessary for our results.  This is the most
technical section in the paper.  In Section 5, we prove
the translation theorem.  In Section 6, we prove that $GL^*$
is sound in the theory.  This includes an algorithm to
evaluate $\sqcnf$ formulas in $L$.

This paper is an expanded version of the author's earlier paper \cite{Perron05}.

\section{Basic Definitions And Notation}

\subsection{Two-Sorted Computational Complexity}

In this paper, we use two-sorted computational complexity.
The two sorts are numbers and binary strings (aka
finite sets).  The numbers are intended to range over the
natural numbers and will be denoted by lower-case letters.
For example, $i$, $j$, $x$, $y$, and $z$ will often be used
for number variables; $r$, $s$, and $t$ will be used for
number terms; and $f$, $g$ and $h$ will be used for
functions that return numbers.  The strings are intended to be
finite strings over $\set{0,1}$ with leading $0$ removed. Since the strings are finite,
they can be thought of as sets where the $i$th bit is 1
if $i$ is in the set.  The strings will
be denoted by upper- case letters.  The letters $X$,$Y$, and
$Z$ will often be used for string variables.

We focus on the complexity class $L$.  Let $R(\vec x, \vec
X)$ be a relation.  If we are going to solve this relation
on a Turing Machine $M$, then the input to $M$ will be
$\vec x$ in unary and $\vec X$ as a series of binary
strings.  So the size of the input is $\vec x + \size {\vec
X}$.  We say $R$ is in $L$ if $R$ can be decided by a
two-tape Turing Machine such that one tape is a read-only
input tape, and less than $O(\log(\vec
x + \size {\vec X}))$ squares are visited on the other tape.

For functions, we say a number function $f(\vec x, \vec
X)$ is in $FL$ if there is a polynomial $p$ such that $f(\vec x, \vec X) < p(\vec x, \size {\vec
X})$, and the relation $f(\vec x, \vec X) = y$ is in $L$.  A
string function $F(\vec x, \vec X)$ is in $FL$ if the size
of $F(\vec x,\vec X)$ is bounded by a polynomial and if
the relation 
$$R(i, \vec x, \vec X) \iff \text{ the $i$th bit of $F(\vec x, \vec X)$ is 1}$$ is in $L$.
This is equivalent to defining $FL$ using a three-tape Turing
Machine with a write-only output tape.

\subsection{Two-Sorted Bounded Arithmetic}
\label{sec:def}

Besides two-sorted computational complexity, we also use the two-sorted bounded arithmetic.
The sorts are the same.  This notation was base on the work
of Zambella in \cite{Zambella96}, but 
we follow the presentation of Cook and Nguyen from
\cite{Cook05,CN06}.

The base language is $$\la = \set{ 0,1,+, \times, <, =, =_2, \in,
\size{} }.$$  The constants $0$ and $1$ are number constants.
The functions $+$ and $\times$ take two numbers as input and return a
number--the intended meanings are the obvious ones.  The language
also includes two binary predicates that take two numbers:
$<$ and $=$.  The predicate $=_2$ is meant to be equality
between strings, instead of numbers.  In practice, the $2$
will not be written because which equality is meant is
obvious from the context.  The membership predicate $\in$ takes
a number $i$ and a string $X$.  It is meant to be true if
the $i$th bit of $X$ is 1 (or $i$ is in the set $X$). This
will also be written as $X(i)$.  The final function $\size X$
takes a string as input and returns a number.  It is intended to be
the number of bits needed to write $X$ when leading zeros
are removed (or the least upper bound  of the set $X$).
The set of axioms 2BASIC is the set of defining axioms for $\la$.
\smallskip
\begin{tabular}{ll}
$B1$. $x+1 \neq 0$                      & $B7$. $x \le x+y$ \\
$B2$. $x+1 = y+1 \implies x = y$        & $B8$. $(x \le y \AND y \le x) \implies x = y$ \\
$B3$. $x+0 = x$                         & $B9$. $0 \le x$ \\
$B4$. $x+(y+1)=(x+y)+1$                 & $B10$. $x \le y \OR y \le x$ \\
$B5$. $x \times 0 = 0$                  & $B11$. $x \le y \iff x < y+1$ \\
$B6$. $x \times (y+1) = (x \times y)+x$ & $B12$. $x \neq 0 \implies \exists y \le x (y+1 = x)$ \\
$L1$. $X(y) \implies y < \size{X}$      & $L2$. $y+1=\size{X} \implies X(y)$ \\
\multicolumn{2}{l}{$SE$. $X=Y \iff [ \size{X} = \size{Y} \AND \forall i < \size{X}(X(i) \iff Y(i))]$} \\
\end{tabular}
\smallskip

We use $\exists X < b ~ \phi$ as shorthand for $\exists X[
(\size X < b) ~ \AND ~ \phi]$.  The shorthand $\forall X < b ~ \phi$
means $\forall X [ (\size X < b) \implies \phi ]$.  The set
$\Sigma_0^B = \Pi_0^B$ is the set of formulas whose only quantifiers
are bounded number quantifiers.  For $i > 0$, the set
$\Sigma_i^B$ is the set of formulas of the form
$\exists \vec X < \vec t \phi$ where $\phi$ is a $\Pi_{i-1}^B$
formula.
For $i > 0$, the set
$\Pi_i^B$ is the set of formulas of the form
$\forall \vec X < \vec t \phi$ where $\phi$ is a $\Sigma_{i-1}^B$
formula.

Now we can define two important axiom schemes:
\begin{align*}
\Phi\text{-COMP:}~&
\exists X \le b \forall i < b [ X(i) \iff \phi(i) ], \\
\Phi\text{-IND:}~&
[\phi(0) \AND \forall x < b [ \phi(x) \implies \phi(x+1)
] ] \implies \phi(b)
\end{align*}
where $\Phi$ is a set of formula and $\phi(i) \in \Phi$, and, for
$\Sigma_i^B$-COMP,
$\phi$ does not contain $X$, but may contain other free variables.

We can now define the base theory.
\begin{defn}
The theory $V^0$ is axiomatized by the 2BASIC
axioms plus $\Sigma_0^B$-COMP.
\end{defn}
It is possible to show that $V^0$ proves
$\Sigma_0^B$-IND (Corollary \cite{CN06}).
This theory is typically viewed at the theory that corresponds to $AC^0$ reasoning.  

From time to time, we will use functions symbols that are
not in $\la$.  The first is $X(i,j) \equiv X(\pair{i,j})$,
where $\pair{i,j} = (i+j)(i+j+1)+2j$ is the pairing function.
It can be thought of as a two dimensional array of bits.  The
second is the row function.  The notation we use is
$X^{[i]}$.  This functions returns the $i$th
row of the two dimensional array $X$.  In the same way, we
can also describe three dimensional arrays.  We also want to
pair string.  So if $X = \pair{Y_1, Y_2}$, then $X^{[0]} =
Y_1$ and $X^{[1]} = Y_2$.  Note that, if we add these
functions with their $\Sigma_0^B$ defining axioms to any
theory $T$ extending $V^0$, we get a conservative extension.  They can
also be used in the induction axioms \cite{Cook05}.  This
means that, if there is a $T$ proof of a formula that uses
these functions, there is a $T$ proof of the same formula
that does not use these functions.

To get a theory that corresponds to $L$ reasoning, we add an
axiom that says there is an output to a function that is
complete for $L$ with respect to $AC^0$ reductions.  This is
a specific example of
the method used in \cite{Cook05} to construct a theory for a
given complexity class.  The theory we define is
$\Sigma_0^B$-rec from \cite{Zambella97}, but we will call it
$VL$.  The complete
function we use is: Given a graph with edge relation
$\phi(i,j)$ and nodes $\set{0,\dots, a}$, where
every vertex in the
graph has out-degree at least
$1$, find a path of length $b$.
This is expressed using the $\Sigma_0^B$-rec axiom:
\begin{equation}
\tag{$\Sigma_0^B$-rec}
\label{xrec}
\begin{split}
\forall x \le a \exists y \le a \phi(x,y) & \implies
\exists Z, \forall w \le b \phi( f(a,w,Z), f(a,w+1,Z) )
\end{split}
\end{equation}
where $f(a,w,Z) = \min \limits_x ~ (Z(w,x) \OR x = a)$ and
$\phi$ is a $\Sigma_0^B$ formula.  The idea is that the
function $f(a,w,Z)$ extracts the $w$th node in the path that
$Z$ encodes.
\begin{defn}
The theory $VL$ is the theory axiomatized by $V^0$ plus $\Sigma_0^B$-rec.
\end{defn}
The $\Sigma_0^B$-rec axiom has the disadvantage that the
path can start at any node.  However, as
Zambella pointed out in \cite{Zambella97}, to is possible to prove that there is a
path of length $b$ starting at a particular node $a$.
\begin{lem}
\label{lem:start}
Let $E$ be the edge relation for a directed graph on the
nodes $\set{0, \dots, n-1}$.  Then for all $a < n$ and $b$, $VL$
proves, if $\forall i < n \exists j
< n~E(i,j)$, then there is a path of length $b$ starting at node $a$.
\end{lem}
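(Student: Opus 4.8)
The plan is to reduce the ``path starting at a fixed node'' statement to the bare $\Sigma_0^B$-rec axiom by building an auxiliary graph in which \emph{every} path of the required length is forced to begin at $a$. Given the edge relation $E$ on $\set{0,\dots,n-1}$ with the totality hypothesis $\forall i < n \,\exists j < n\, E(i,j)$, I would work on the node set $\set{0,\dots,b}\times\set{0,\dots,n-1}$ (coded as pairs, which is available as a conservative extension by the remarks preceding the lemma), and define a new edge relation $\psi$ that is a ``layered'' or ``time-stamped'' version of $E$: from $\pair{t,u}$ the only edges go to $\pair{t+1,v}$ with $E(u,v)$ (for $t<b$), and from the top layer $\pair{b,u}$ we loop, say to $\pair{b,u}$ itself, so that out-degree at least $1$ is preserved everywhere. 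Crucially I also redirect \emph{every} node in layer $0$ other than $\pair{0,a}$ — in fact every node whose layer index is $0$ — is irrelevant because I will only ever ask about the node $\pair{0,a}$; the point is that $\psi$ is still a total relation on the product node set, so $\Sigma_0^B$-rec applies and yields some $Z$ coding a $\psi$-path $P_0,P_1,\dots,P_b$ of length $b$, but with no control over the start.

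To fix the start I add one more gadget: prepend a ``gate'' layer. Concretely, enlarge the node set by one extra coordinate value and make a fresh source $s$ whose unique out-edge is to $\pair{0,a}$, and make $s$ the unique node reachable in a first step — i.e.\ arrange $\psi'$ so that the only node of out-degree leading anywhere new is $s$, while all genuine product nodes that are not $\pair{0,a}$ are given self-loops. Then any $\psi'$-path of length $b+1$ whose behaviour we can read off via $f(a,w,Z)$ is forced, by a $\Sigma_0^B$ (hence $V^0$-provable, using $\Sigma_0^B$-IND) induction on $w \le b$, to have the form $s,\pair{0,a},\pair{1,u_1},\dots,\pair{b,u_b}$ with $E(a,u_1)$ and $E(u_k,u_{k+1})$ throughout. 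Projecting onto the second coordinate with an $AC^0$ (indeed $\Sigma_0^B$-definable) function then extracts the desired $E$-path of length $b$ starting at $a$, and the existence of the projected string is obtained by $\Sigma_0^B$-COMP, which is available in $VL \supseteq V^0$.

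The main technical obstacle is not the graph construction but verifying, inside $VL$, that the path string delivered by $\Sigma_0^B$-rec really does start where we want — the axiom hands us a $Z$ with $\forall w \le b\, \phi(f(a,w,Z),f(a,w{+}1,Z))$ but says nothing about $f(a,0,Z)$. This is exactly why the ``gate'' node $s$ is needed: it must be the \emph{only} node with an out-edge that is not a self-loop from outside the intended trajectory, so that the edge condition at $w=0$, namely $\psi'(f(a,0,Z),f(a,1,Z))$, already pins down $f(a,0,Z)=s$ and $f(a,1,Z)=\pair{0,a}$, and then a routine $\Sigma_0^B$-IND on the layer index propagates this. I would also need to be slightly careful that all the gadget relations ($\psi$, $\psi'$, the projection, the pairing/unpairing) are genuinely $\Sigma_0^B$ with the edge relation $E$ as a parameter so that $\Sigma_0^B$-rec and $\Sigma_0^B$-COMP apply as stated; this is routine given that $\pair{\cdot,\cdot}$ and the row function have $\Sigma_0^B$ defining axioms and may be used freely in the induction schemes, as noted in the excerpt. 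Everything else — bounding the new node set by a term, checking out-degree $\ge 1$, reindexing — is bookkeeping.
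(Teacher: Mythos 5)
You have correctly isolated the crux --- the $\Sigma_0^B$-rec axiom hands you a path but gives no control whatsoever over $f(a,0,Z)$ --- but your gate gadget does not resolve it. To keep every node of out-degree at least $1$ you are forced to put self-loops (or some other totalizing edges) on the top layer and on the product nodes outside the intended trajectory; once those edges exist, the string $Z$ produced by the axiom may perfectly well encode a path that sits on such a self-loop for all $b+1$ steps, or one that starts at some $\pair{t,u}$ with $t>0$ and climbs the layers from there. Nothing in the edge condition $\psi'(f(a,0,Z),f(a,1,Z))$ ``pins down'' $f(a,0,Z)=s$: that condition only says the first step traverses \emph{some} $\psi'$-edge, and the self-loops are $\psi'$-edges. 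More fundamentally, you cannot force the start of the axiom's path by adding or redirecting edges, because the axiom is purely existential over paths and every edge you add to preserve totality creates unwanted witnesses.

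The paper's fix is to make the anchor node \emph{unavoidable} rather than trying to make it the unique possible start, and to request a path long enough that reaching the anchor still leaves $b$ steps. It works on pairs $\pair{w,i}$ where the first coordinate is a counter incremented mod $b+1$, and defines the auxiliary edge relation so that an edge into $\pair{w',j}$ requires $E(i,j)$ when $w'\neq 0$ and forces $j=a$ when $w'=0$. Applying $\Sigma_0^B$-rec to get a path of length $2b$, one argues (by $\Sigma_0^B$-IND on the counter) that wherever the path starts, the counter must return to $0$ within the first half, and at that moment the node is exactly $\pair{0,a}$; the $b$ edges that follow all have $w'\neq 0$ and hence project to genuine $E$-edges, giving the desired path from $a$. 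Your length-$(b+1)$ layered graph has neither the slack nor any mechanism forcing the path through $\pair{0,a}$, so the proposal as written fails.
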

\begin{proof}
Define $\phi(\pair{w,i},\pair{w',j})$ as
$$\phi(\pair{w,i},\pair{w',j}) \equiv (w' = w+1 \mod b+1) \AND (w' \neq 0
\implies E(i,j)) \AND (w' = 0 \implies j = a).$$
Take a path of length $2b$ in the graph of
$\phi$.  At some point in the first half of that path, the
path passes through the node $\pair{0,a}$.  Starting from
there we can extract a path of length $b$ in $E$ that starts
at node $a$.
\end{proof}

\subsection{A Universal Theory For $L$ Reasoning}

Another way to get a theory for $L$ is to define a universal
theory with a language that contains a function symbol for
every function in $FL$.
Then, we get a
theory for $L$ by taking the defining axioms for these
functions.  This is the idea behind other universal theories
like $PV$ and
$\overline{V^0}$.
In our case, we characterize the $FL$ functions using Lind's
characterization \cite{Lind74} adjusted for the two-sort
setting.

In the next
definition, we define the set of function symbols in $\lfl$
and give their intended meaning.

\begin{defn}
\label{def_lfl}
The language $\lfl$ is the smallest language satisfying

\begin{enumerate}
\item $\la \cup \set{pd, \min }$ is a subset of $\lfl$ and have
defining axioms 2BASIC, and the
axioms
\begin{eqnarray}
\label{pdaxiom1}
pd(0)=0\\
\label{pdaxiom2}
pd(x+1)=x \\
\label{minaxiom}
min(x,y)=z \iff (z=x \AND x \le y) \OR (z=y \AND y \le x)
\end{eqnarray}

\item For every open formula $\alpha(i,\vec x, \vec X)$ over $\lfl$
and term $t(\vec x, \vec X)$ over $\la$, there is a string function
$F_{\alpha,t}$ in $\lfl$ with bit defining axiom

\begin{equation}
\label{lflString}
 F_{\alpha,t}(\vec x, \vec X)(i) \iff i < t(\vec x, \vec
X) \AND \alpha(i, \vec x, \vec X)
\end{equation}

\item For every open formula $\alpha(z,\vec x, \vec X)$ over $\lfl$
and term $t(\vec x, \vec X)$ over $\la$, there is a number function
$f_{\alpha,t}$ in $\lfl$ with defining axioms

\begin{eqnarray}
\label{lflNumber1}
f_{\alpha,t}(\vec x, \vec X) \le t( \vec x, \vec X) \\
\label{lflNumber2}
z < t(\vec x, \vec X) \AND \alpha(z, \vec x,
\vec X)
\implies \alpha(f_{\alpha,t}(\vec x, \vec X), \vec x, \vec X) \\
\label{lflNumber3} z < f_{\alpha,t}(\vec x, \vec X) \implies \neg
\alpha(z, \vec x, \vec X)
\end{eqnarray}

\item For all number functions $g(\vec x, \vec X)$ and $h(p, y, \vec x, \vec X)$
in $\lfl$ and term $t(y, \vec x, \vec X)$ over $\la$, there is a number function
$f_{g,h,t}(y, \vec x, \vec X)$ with defining axioms

\begin{eqnarray}
\label{pnumrec1}f_{g,h,t}( 0, \vec x, \vec X) = \min(g(\vec x, \vec X),t(\vec x, \vec X)) \\
\label{pnumrec2}f_{g,h,t}( y+1, \vec x, \vec X ) = \min (h(
f(y, \vec x, \vec X, y, \vec x, \vec X) ), t(\vec x, \vec X)) 
\end{eqnarray}
\end{enumerate}
\end{defn}

The last scheme is called $p$-bounded number recursion.
The $p$-bounded number recursion is equivalent to
the $log$-bounded string recursion given in \cite{Lind74}.
The other schemes come from the definition of
$\mathcal{L}_{FAC^0}$ in \cite{Cook05}.

It is not difficult to see every function in $\lfl$ is in
$FL$.  The only
point we should note is that the intermediate values in the
recursion are bounded by a polynomial in the size of the input.
This means, if we store intermediate values in binary, 
the space used is bounded by the log of the size of the input.  So the
recursion can be simulated in $L$.  To show that every
$FL$ function has a corresponding function symbol in $\lfl$,
note that the $p$-bounded number recursion can be used to
traverse a graph where every node has out-degree at most
one.

\begin{defn}
$\vl$ is the theory over the language $\lfl$ with
B1-B11, SE, plus \ref{pdaxiom1};
\ref{pdaxiom2}; \ref{minaxiom};
axiom \ref{lflString} for each string function
$F_{\alpha,t}$ in $\lfl$; axioms \ref{lflNumber1}, \ref{lflNumber2},
and \ref{lflNumber3} for each number function 
$f_{\alpha,t}$ in $\lfl$; and axioms \ref{pnumrec1} and
\ref{pnumrec2} for each number function $f_{g,h,t}$ in
$\lfl$.
\end{defn}
An open($\mathcal L$) formula is a formula over the language
$\mathcal L$ that does not have any
quantifiers.

The important part of this theory is that it really is a universal
version of $VL$.
\begin{thm}
$\vl$ is a conservative extension of $VL$.
\end{thm}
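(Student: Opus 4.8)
The standard template for results of this shape (cf. Cook–Nguyen's treatment of $\overline{V^0}$ and $PV$) is: first show $\vl$ \emph{extends} $VL$ in a way that preserves theorems, then show the extension is \emph{conservative} by a model-theoretic or proof-theoretic argument. The two directions require different work, so I would split the proof accordingly.

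\smallskip

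\textbf{Step 1: $\vl$ proves (the translations of) the axioms of $VL$.} Since $VL = V^0 + \Sigma_0^B\text{-rec}$, and $V^0$ is axiomatized by 2BASIC plus $\Sigma_0^B$-COMP, I must check that $\vl$ proves each of these. The 2BASIC axioms B1--B11 and SE are literally among the axioms of $\vl$; the only missing one is B12, which is provably equivalent (using $pd$ and its axioms \ref{pdaxiom1}, \ref{pdaxiom2}) to a statement $\vl$ can derive, since $pd(x)$ supplies the witness $y$ with $y+1 = x$ when $x \neq 0$. For $\Sigma_0^B$-COMP, given a $\Sigma_0^B$ formula $\phi(i)$ over $\la$, the comprehension string is exactly $F_{\phi,b}$ from scheme (2) of Definition~\ref{def_lfl}, whose bit-defining axiom \ref{lflString} is an axiom of $\vl$ — this is the usual observation that $\overline{V^0} \supseteq \lfl$ already proves $\Sigma_0^B$-COMP. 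The one genuinely substantive point is $\Sigma_0^B$-rec: given the hypothesis $\forall x \le a\, \exists y \le a\, \phi(x,y)$, I need to produce the path $Z$. Here I use the $p$-bounded number recursion (scheme (4)): define a number function $g$ picking the start node and a step function $h(p,y,\dots) = f_{\phi(p,\cdot),a+1}(\dots)$ (the minimization scheme (3) applied to $\phi$, using the hypothesis to guarantee \ref{lflNumber2} fires), iterate it to get a number function $n(w)$ giving the $w$th node, and then let $Z$ be the string function (scheme (2)) with $Z(w,x) \iff x = n(w)$. One checks $f(a,w,Z) = n(w)$ and that $\phi(n(w), n(w+1))$ holds for all $w \le b$, using \ref{lflNumber2}, \ref{lflNumber3}, \ref{pnumrec1}, \ref{pnumrec2}. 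This establishes that $\vl$ proves every theorem of $VL$ (after the trivial translation of $\la$-formulas into $\lfl$-formulas).

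\smallskip

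\textbf{Step 2: conservativity.} Let $\psi$ be an $\la$-formula (more generally a formula in the language of $VL$) with $\vl \proves \psi$; I must show $VL \proves \psi$. The clean way is the model-theoretic argument: take any model $M \satisfies VL$; it suffices to expand $M$ to a model $M'$ of $\vl$ with the same $\la$-reduct. For this I interpret each new function symbol of $\lfl$ in $M$. The base symbols $pd, \min$ are defined by explicit $\la$-terms/formulas, so $V^0 \subseteq VL$ defines them. For the schemes, I proceed by induction on the construction of $\lfl$: each string function $F_{\alpha,t}$ is available by $\Sigma_0^B$-COMP in $M$ once $\alpha$ is (which it is, by induction, since $M$ now interprets all $\lfl$-symbols occurring in $\alpha$ as $\Sigma_0^B$-definable operations — here one uses that $\lfl$-definable functions are $\Sigma_1^B$-definable in $VL$, in fact provably $\Delta_1^B$); each number function $f_{\alpha,t}$ is the least-witness function, definable by $\Sigma_0^B$-minimization which $V^0$ provides; and $p$-bounded number recursion $f_{g,h,t}$ is exactly what $\Sigma_0^B$-rec buys us — iterating a number function along a functional graph of out-degree one, with values bounded by $t$, is a special case of the recursion axiom (Lemma~\ref{lem:start} lets us anchor the path at the correct start node $g(\vec x, \vec X)$). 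The key supporting lemma, which I would state explicitly, is that every $\lfl$ function is $\Sigma_1^B$-definable in $VL$ and $VL$ proves its $\lfl$-defining axioms; conservativity then follows by the standard fact that adding $\Sigma_1^B$-definable function symbols with their defining axioms to a theory containing $V^0$ yields a conservative extension.

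\smallskip

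\textbf{Main obstacle.} The technical heart is Step 1's treatment of $\Sigma_0^B$-rec via $p$-bounded number recursion, together with the dual fact in Step 2 that $\Sigma_0^B$-rec suffices to define $p$-bounded number recursion — i.e. showing these two recursion principles are equivalent over $V^0$. The subtlety is the bookkeeping with the bounding term $t$ and the $\min$ clamping in \ref{pnumrec2}: one must verify that clamping to $t$ does not disturb the graph-traversal correspondence, and that the encoding of "$w$th node of the path" via the pairing function interacts correctly with the $f(a,w,Z)$ extraction function from the $\Sigma_0^B$-rec axiom. None of this is deep, but it is where an honest proof spends its effort; the rest is the routine induction on term/formula structure and the standard conservativity-of-definitional-extension argument.
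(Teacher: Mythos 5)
Your proposal matches the paper's proof in both structure and substance: the extension direction is handled by deriving $\Sigma_0^B$-COMP from the string-comprehension scheme (after replacing the $\Sigma_0^B$ formula by an equivalent open $\lfl$-formula via the minimization functions --- a step worth making explicit, since scheme (2) only accepts open formulas) and $\Sigma_0^B$-rec from $p$-bounded number recursion, while conservativity is proved exactly as you suggest, by expanding a model of $VL$ to the $\lfl$-functions one at a time in dependency order. The paper is in fact terser than your sketch, deferring the model-expansion details to Cook and Nguyen's treatment of the analogous universal theories.
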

\begin{proof}
First to prove that $\vl$ is an extension of $VL$.  All that
is required is to prove the $\sob$-COMP and $\sob$-rec
axioms.  To prove $\sob$-COMP, note that every $\sob$
formula $\phi$ is equivalent to an open formula $\phi'$.
For example, 
$$VL \proves \exists z < b \psi(z,\vec x, \vec X) \iff \psi(
f_{\psi,b}(\vec x, \vec X), \vec x, \vec X )$$
when
$\psi$ is an open formula.  Then the function $F_{\phi',t}$
is the witness for $$\exists Z \le t \forall i<t [ Z(i) \iff
\phi(i) ].$$  To prove the $\sob$-rec axiom, we can define a
function $f(i,a,E)$ that returns the $i$th node in the path
the axiom says exists.
The function $f$ can be defined using $p$-bounded number
recursion.  From there, a function witnessing the $\sob$-rec
axiom can be defined.

To prove that the extension is conservative,
we show how to take any model $M$ of $VL$ and
find an expansion that is a model of $\vl$.  The idea is
to expand the model one function at a time.
We can order the functions in $\lfl$ such that each function
is defined in terms of the previous functions.  Let
$\mathcal L_i$ be the language $\la$ plus the first $i$
functions in $\lfl$. Let $M_i$ be the model obtained by
expanding $M$ to the functions in $\mathcal
L_i$. 
We will show that the model $M_\infty = \bigcup M_i$ is a model $\vl$.
A similar proof can be found in Chapter 9 of \cite{CN06} and we will not
repeat it here.

\end{proof}

\subsection{Quantified Propositional Calculus}

We are also interested in quantified propositional proof
systems.  The proof systems we use were originally defined
in \cite{KP90}, and then they were redefined in \cite{CM04,
Morioka05}, which is the presentation we follow.

The set of connectives are $\set{\AND,
\OR, \neg, \exists, \forall, \true, \false }$, where $\true$
and $\false$ are constants for true and false, respectively.
Formulas are built using these connectives in the usual way.
We will often refer to formulas by the number of quantifier
alternations.
\begin{defn}
The set of formulas $\Sigma_0^q = \Pi_0^q$ is the set of
quantifier-free propositional formulas.  For $i>0$, the
set of $\Sigma_i^q$ ($\Pi_i^q$) formulas is the smallest set of formulas
that contains $\Pi_{i-1}^q$ ($\Sigma_{i-1}^q$) and is closed
under $\AND$, $\OR$,
existential (universal) quantification, and if $A\in
\Pi_i^q$ ($A\in \Sigma_i^q$) then $\neg A \in \Sigma_i^q$
($\neg A \in \Pi_i^q$).
\end{defn}

The first proof system, from which all others will be
defined, is the proof system $G$.  This proof
system is a sequent calculus based on Gentzen's system $LK$.
The system $G$ is essentially the DAG-like, propositional
version of $LK$.  We will not give all of the rules, but
will mention a few of special interest.

The cut rule is
\begin{prooftree}
\AX$ A, \Gamma \fCenter \Delta$
\AX$ \Gamma \fCenter \Delta, A$
\LL{cut}
\BI$\Gamma \fCenter \Delta$
\end{prooftree}
In this rule, we call $A$ the cut formula.
There are also four rules that introduce quantifiers:
\begin{prooftree}
\AX$A(x), \Gamma \fCenter \Delta$
\LL{$\exists$-left}
\UI$\exists z A(z), \Gamma \fCenter \Delta$

\AX$ \Gamma \fCenter \Delta,A(B)$
\LL{$\exists$-right}
\UI$ \Gamma \fCenter \Delta,\exists z A(z)$
\noLine
\BIC{}
\end{prooftree}

\begin{prooftree}
\AX$ \Gamma \fCenter \Delta,A(x)$
\LL{$\forall$-left}
\UI$ \Gamma \fCenter \Delta, \forall z A(z)$

\AX$A(B), \Gamma \fCenter \Delta$
\LL{$\forall$-right}
\UI$\forall z A(z), \Gamma \fCenter \Delta$
\noLine
\BIC{}
\end{prooftree}
These rules have conditions on them.  In $\exists$-left and
$\forall$-right, the variable $x$ must not appear in the
bottom sequent.  In these rules, $x$ is called the
eigenvariable.  In the other two rules, the formula $B$ must
be a $\Sigma_0^q$ formula, and no variable that appears
free in $B$ can be bound in $A(x)$.

The initial sequents of $G$ are sequents of the form
$\fCenter \true$, $\false \fCenter$, or $x
\fCenter x$, where $x$ is any propositional variable.  A $G$ proof is a
series of sequents such that each sequent is either an
initial sequent or can be derived from previous sequents
using one of the rules of inference.  The proof system $G_i$ is $G$ with cut formulas
restricted to $\Sigma_i^q$ formulas.

We define $G^*$ as the treelike version of $G$.  So,
a $G^*$ proof is a $G$ proof where each sequent in used as
an upper sequent in an inference at most once.  A $G_i^*$ proof is
a $G^*$ proof in which cut formulas are prenex $\Sigma_i^q$.
In \cite{Morioka05}, it was shown that, for treelike proofs,
it did not matter if the cut formulas in $G_i^*$ were prenex
or not.  So when we construct $G_i^*$ proofs, the cut formulas
will not always be prenex, but that does not matter.

To make proofs simpler, we assume that all treelike proofs are in {\em
free-variable normal form}.
\begin{defn}
\label{def:fvnf}
A parameter variable for a $G_i^*$ proof $\pi$ is a variable that appears
free in the final sequent of $\pi$.
A proof $\pi$ is in {\em free-variable normal form} if (1) every non-parameter
variable is used as an eigenvariable exactly
once in $\pi$, and (2) parameter variables are not used as eigenvariables.
\end{defn}
Note that, if a proof is treelike, we can always put it in
free-variable normal form by simply renaming variables.  In
fact, $VPV$ proves that every treelike proof can be put in
free-variable normal form.

A useful property of these proof systems is the {\em subformula
property}.  It can be shown in $VL$ that every formula in a
$G_i^*$ proof is an ancestor (and therefore a subformula) of a cut formula or a formula
in the final sequent.  This is useful because it tells us
that any non-$\Sigma_i^q$ formula in a $G_i^*$ proof must be
an ancestor of a final formula.

\subsection{Truth Definitions}

In order to reason about the proof systems in the theories,
we must be able to reason about quantified propositional
formulas.  We follow the presentation in
\cite{Krajicek95, KP90, CM04}.

%If $F$ is a formula, we will use $\code F$ as the string encoding of the formula. 
Formally formulas will be coded as strings, but we will not
distinguish between a formula and its encoding.  So if $F$
is a formula, we will use $F$ as the string encoding the
formula as well.
The method of coding a formula can be found in \cite{CM04}.  
%If $A$ is an assignment, then $\code A$ will be the string encoding of an assignment.  

In this paper, we are only interested in $\Sigma_0^q$
formulas and prenex $\Sigma_1^q$ formulas.
For $\Sigma_0^q$ formulas, we are able to give an $\sob(\lfl)$ functions
that evaluates the formula.  This formula will be referred to
using $A \satisfies_0 F$, where $A$ is an assignment and $F$
is a formula.  We leave the precise definition to the
readers.

Given a prenex $\Sigma_1^q$ formula $F$, the truth
definition is a formula that says there is an assignment to
the quantified variables that satisfies the $\Sigma_0^q$
part of the formula. This formula will be referred to as  $A
\satisfies_1 F$.

Valid formulas (or tautologies) are defined as
\begin{align*}
TAUT_i(F)  \equiv \forall A, (\text{``A is an assignment to the variables of $F$''}  \implies A \satisfies_i F)
\end{align*}

This truth definition can be extended to define the truth of
a sequent.  So, if $\Gamma \fCenter \Delta$ is a sequent of
$\Sigma_i^q \cup \Pi_i^q$ formulas, then
\begin{equation*}
\begin{split}
(A \satisfies_i ~ \Gamma \fCenter \Delta) \equiv &
  \text{``there exists a formula in $\Gamma$ that $A$ does not satisfy'' } \\
  & \OR \text{ ``there exists a formula in $\Delta$ that $A$ satisfies''}
\end{split}
\end{equation*}

Another important formula we will use is the reflection
principle for a proof system.  We define the $\Sigma_i^q$
reflection principle for a proof system $P$ as
\begin{equation*}
\begin{split}
\Sigma_i^q \text{-RFN}(P) \equiv \forall F \forall \pi, ( \text{``$\pi$ is a $P$ proof of $F$''} \AND F \in \Sigma_i^q ) \implies TAUT_i(F)
\end{split}
\end{equation*}
This formula essentially says that, if there exists a $P$ proof of a
$\Sigma_i^q$ formula $F$, then $F$ is valid.  Another way of putting it
is to say that $P$ is sound when proving $\Sigma_i^q$ formulas.

\subsection{Propositional Translations}

There is a close connection between the theory $V^i$ and the
proof system $G_i^*$.  You can think of $G_i^*$ as the
non-uniform version of $V^i$.  This idea might not make much
sense at first until you realize you can translate a $V^i$
proof into a polynomial-size family of $G_i^*$ proofs.
The translation
that we use is described in \cite{Cook05,CM04}.  It is
a modification of the Paris-Wilkie translation \cite{PW85}.
Given a $\Sigma_i^B$ formula $\phi(\vec x, \vec
X)$ over the language $\la$, we want to
translate it into a family of propositional formulas
$||\phi(\vec
x, \vec X)||[\vec m; \vec n]$, where the size of the formulas is
bounded by a polynomial in $\vec m$ and $\vec n$.
The formula $||\phi(\vec x, \vec X)||[\vec m; \vec n]$ is
meant to be a formula that is a tautology when $\phi(\vec x, \vec X)$ is true in the standard model whenever
$x_i = m_i$ and $|X_i| = n_i$.  Then if $\phi(\vec x, \vec X)$ is true in the standard model for all $\vec x$ and $\vec X$,
then every $||\phi(\vec x, \vec X)||[\vec m; \vec n]$ is a tautology.  

The variables $\vec m$ and $\vec n$ will often be omitted since
they are understood.  The free variables in the propositional formula will
be $p_j^{X_i}$ for $j < n_i-1$.  The variable $p_j^{X_i}$ is
meant to represent the value of the $j$th bit of $X_i$; we
know that
the $n_i$th bit is 1, and for $j>n_i$, we know the $j$th bit is
0.
The definition of the translation proceeds by structural induction on $\phi$.

Suppose $\phi$ is an atomic formula.  Then it has one of the
following
forms:  $s = t$, $s < t$, $X_i(t)$, or one of the
trivial formulas $\false$ and $\true$, for terms
$s$ and $t$.  
Note that
the terms $s$ and $t$ can be evaluated immediately.
This is because the exact value of every number variable and the size of
each string variable is known.  Let $val(t)$ be value
of the term $t$.

In the first case, we define $||s = t||$ as the formula $\true$, if $val(s) = val(t)$, and
$\false$, otherwise.
A similar construction is done for $s<t$.  If $\phi$ is one of
the trivial formulas, then $||\phi||$ is the same
trivial formula.  So now, if $\phi \same X_i(t)$,
let $j= val(t)$.  Then the translation is defined as follows:
$$||\phi|| \same \begin{cases}
p_j^{X_i} & \text{if } j < n_i - 1 \\
1 & \text{if } j = n_1 - 1 \\
0 & \text{if } j > n_1 - 1 \\
\end{cases}$$

Now for the inductive part of the definition.  Suppose $\phi
\same \alpha \AND \beta$.  Then $$||\phi|| \same
||\alpha|| \AND ||\beta||.$$
When the connective is $\OR$ or $\neg$, the definition is similar.
If the outermost connective is a number
quantifier bound by a term $t$, let $j=val(t)$.  Then the translation is
defined as
\begin{equation*}
\begin{split}
||\exists y \le t, \alpha(y)|| & \same
\bigvee
\limits_{i=0}^j ||\alpha(y)||[i] \\
||\forall y \le t, \alpha(y)|| &
\same \bigwedge
\limits_{i=0}^j ||\alpha(y)||[i] \\
||\exists Y \le t, \alpha(Y)|| & \same
\exists p_0^Y,
\dots, \exists p_{m-2}^Y, \bigvee
\limits_{i=0}^j ||\alpha(Y)||[i] \\
||\forall Y \le t, \alpha(Y)|| & \same
\forall p_0^Y,
\dots, \forall p_{m-2}^Y, \bigwedge
\limits_{i=0}^j ||\alpha(Y)||[i]
\end{split}
\end{equation*}

Now we are able to state the translation theorem for $V^i$ and $G_i^*$.
\begin{thm}
Suppose $V^i \proves \phi(\vec x, \vec X)$, where $\phi$ is
a bounded formula.  Then there are
polynomial-size $G_i^*$ proofs of the family of tautologies
$$||\phi(\vec x, \vec X)||[\vec m; \vec n].$$
\end{thm}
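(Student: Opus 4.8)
The plan is to induct on the structure of a $V^i$ proof after first replacing it by one in anchored normal form. I would use the two-sorted sequent-calculus presentation of $V^i$ ($LK^2$ together with the finitely many 2BASIC axioms as extra initial sequents and, for $i \ge 1$, the $\Sigma_i^B$-COMP axioms, recalling that $V^i$ then proves $\Sigma_i^B$-IND so that one may equivalently use an IND rule), and invoke the standard partial cut-elimination (anchored completeness) theorem so that every formula occurring in the proof is a substitution instance of an IND formula, of a 2BASIC or COMP axiom, or is a side formula of these --- in every case a $\Sigma_i^B$ or $\Pi_i^B$ formula. This normalization is the essential preliminary: it guarantees that the translation of every formula of the proof is a $\Sigma_i^q$ or $\Pi_i^q$ formula (not necessarily prenex, which by the result of Morioka cited earlier is harmless for treelike proofs), and hence that every cut we introduce is a legal $G_i^*$ cut; a cut on a $\Pi_i^q$ formula $A$ is reduced to a cut on the $\Sigma_i^q$ formula $\neg A$ by a $\neg$-left, a $\neg$-right, and swapping the two premises.

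The induction itself proves the stronger statement that for every sequent $\Gamma \fCenter \Delta$ of the normalized proof and every assignment of values $\vec m; \vec n$ to its free variables, there is a $G_i^*$ proof of $||\Gamma|| \fCenter ||\Delta||$ of size bounded by a fixed polynomial in $\vec m$, $\vec n$, and the length of the arithmetic proof. The structural rules and the $\AND$, $\OR$, $\neg$, and cut rules translate one-for-one into $G_i^*$ rules. A string-quantifier inference translates into a block of propositional quantifier inferences, and a bounded number quantifier $\exists y \le t$ or $\forall y \le t$ translates into a big $\bigvee$ or $\bigwedge$ introduced by a short cascade of $\OR$-right or $\AND$-right inferences together with weakenings. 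The eigenvariable conditions on the arithmetic $\exists$-left and $\forall$-right rules become the eigenvariable conditions on the corresponding propositional rules because, the proof being in free-variable normal form, distinct free variables of the arithmetic proof translate to disjoint blocks of $p$-variables. The 2BASIC and equality axioms translate into sequents that are true by evaluation of their (numeral-valued) terms and so have constant-size proofs. For a COMP axiom, whose translation has the form $\fCenter \exists p^{X}_0 \cdots \exists p^{X}_{b-1}\, \bigwedge_{j<b}(p^{X}_j \iff ||\phi(j)||)$, I would introduce the witness bits one at a time: at stage $j$, cut on the $\Sigma_i^q$ formula $||\phi(j)||$ and supply $\true$ or $\false$ for $p^{X}_j$ in the two branches, so that only quantifier-free witness formulas are used, as the $\exists$-right side condition demands; this costs $O(b)$ cuts. (When $i = 0$, in particular in the applications to $VL$, $||\phi(j)||$ is itself quantifier-free and can be used directly as the witness, and no cuts are needed.)

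The substantive case is the IND rule: from a derivation of $\Gamma, \phi(x) \fCenter \phi(x+1), \Delta$ one must produce, for fixed parameters, a $G_i^*$ proof of $||\Gamma||, ||\phi(0)|| \fCenter ||\phi(b)||, ||\Delta||$. Propositionally this is the problem of chaining the $b$ sequents $||\phi(j)|| \fCenter ||\phi(j+1)||$ --- each a substitution instance (numeral $j$ for $x$) of the translation of the premise, available at polynomial size by the induction hypothesis, using that the translation commutes with substituting numerals for number variables and lengths for string variables and that $G_i^*$ is closed under such substitutions --- into $||\phi(0)|| \fCenter ||\phi(b)||$ by cuts on the $\Sigma_i^q$ formulas $||\phi(j)||$, carrying the side contexts $||\Gamma||$ and $||\Delta||$ along. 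To keep the resulting proof treelike and of polynomial size I would not use a linear chain (which re-proves the premise derivation exponentially often in a tree) but a balanced, divide-and-conquer arrangement, deriving $||\phi(a)|| \fCenter ||\phi(c)||$ from $||\phi(a)|| \fCenter ||\phi(m)||$ and $||\phi(m)|| \fCenter ||\phi(c)||$ with $m$ the midpoint, which makes only $O(b \log b)$ copies of a polynomial-size derivation. I expect this to be the main obstacle --- simultaneously keeping the proof treelike, polynomial-size, and restricted to ($\Sigma_i^q$, harmlessly non-prenex) cut formulas --- while the remaining bookkeeping (propagation of the side contexts through the chain, and uniformity of the polynomial bound in the arithmetic proof) is routine.
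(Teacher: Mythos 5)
Your proposal is correct and follows essentially the same route as the source the paper relies on: the paper does not prove this theorem itself but cites it from Cook--Nguyen and Cook--Morioka, whose argument is exactly what you describe (pass to an anchored $LK^2$ proof so all cut formulas translate to $\Sigma_i^q \cup \Pi_i^q$, translate rule by rule, and handle the COMP axioms and IND specially), and it is also the template the paper later instantiates for its own $VL$--$GL^*$ translation theorem. One small correction: your stated reason for balancing the IND chain is based on a misconception --- a linear chain of cuts deriving $||\phi(0)|| \fCenter ||\phi(j+1)||$ from $||\phi(0)|| \fCenter ||\phi(j)||$ and the $j$-th instance of the translated premise is already treelike (an unbalanced ``caterpillar''), uses each of the $b$ premise derivations exactly once, and is therefore polynomial-size; your divide-and-conquer variant also works, so nothing breaks, but it is not needed.
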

This type of theorem is the standard way of proving that
the reasoning power of the proof system is as least as
strong as that of the theory.

%%%%%%%%%%%%%%%%%%%%%%%%%%%%%%%%%%%%%%%%%%%%%%%%%%%%%%%%%%%
%%%%%%%%%%%%%%%%%%%%%%%%%%%%%%%%%%%%%%%%%%%%%%%%%%%%%%%%%%%
%%%%%%%%%%%%%%%%%%%%%%%%%%%%%%%%%%%%%%%%%%%%%%%%%%%%%%%%%%%
%%%%%%%%%%%%%%%%%%%%%%%%%%%%%%%%%%%%%%%%%%%%%%%%%%%%%%%%%%%

\section{Definition of $GL^*$}

In this section, we will define the proof system we
wish to explore.  As was stated in the introduction, this
proof system is defined by restricting cut formulas to
a set of formulas that can be evaluated in $L$.  Alone that
is not enough to change the
strength of the proof system, so we also restrict the
use of eigenvariables.

The first step is to define a set of formulas that can be
evaluated in $L$.  These formula will be bases on $CNF(2)$
formulas.  A $CNF(2)$ formula is a $CNF$ formula where no
variable has more than two occurrences in the entire formula.
It was shown in \cite{Johannsen04} that determining whether or not a
given $CNF(2)$ formula is satisfiable is complete for $L$.
Based on this we get the following definition:
\begin{defn}
\label{sqcnf}
The set of formulas $\sqcnf$ is the smallest set
\begin{enumerate}
\item containing $\Sigma_0^q$,
\item containing every formula $\exists \vec z, \phi( \vec
z,
\vec x )$
where (1) $\phi$ is a quantifier-free CNF
formula $\bigwedge_{i=1}^m C_i$ and (2) existence
of a
$z$-literal $l$ in $C_i$ and $C_j$, $i \neq j$,  implies
existence of an $x$-variable $x$ such that $x \in C_i$ and
$\neg
x \in C_j$ or vice versa, and
\item closed under substitution of $\Sigma_0^q$ formulas
that contain only $x$-variables
for $x$-variables.
\end{enumerate}
\end{defn}
\begin{defn}
The idea behind this definition is that any assignment to
the variables $\vec x$ reduces the quantifer-free protion to
a $CNF(2)$ formula in $\vec z$.
$GL^*$ is the propositional proof system $G^*_1$ with cuts
restricted to
$\sqcnf$ formulas in which every free variable in a
non-$\Sigma_0^q$ cut formula is a parameter variable.
\end{defn}

The restriction on the free variables in the cut formula
might seem strange, but it is necessary.  If we did not
have this restriction, then the proof system would be as
strong as $G_1^*$.  We will not give a full proof of this,
but the interested reader can see information on $GPV^*$ in
\cite{Perron07}.  What we will show is that, if the
restriction on the variables is not present, then the proof
system can simulate $G_1^*$ for $\Sigma_1^q$ formulas. 

Let $H^*$ be the proof system $G^*_1$ with cuts restricted to
$\sqcnf$ formulas and no restriction on the free variables.
\begin{defn}
An extension cedent $\Lambda$ is a sequence of formulas
\begin{equation}
\label{lam}
\Lambda \same y_1 \iff B_1, y_2 \iff B_2,
\dots, y_n \iff
B_n\end{equation}
where $B_i$ is a $\Sigma_0^q$ formula that does not mention any of the
variables $y_i,\dots,y_n$.
We call the variables $y_1,\dots,y_n$ extension variables.
\end{defn}

Based on a lemma in \cite{Krajicek95}, Cook and Nguyen
proved the following lemma in \cite{CN06}.
\begin{lem}
\label{EPK_G1}
If $\pi$ is a $G_1^*$ proof of $\exists \vec z A(\vec z, \vec
x)$, where $A$ is a $\Sigma_0^q$ formula,
then there exists a $PK$ proof $\pi'$ of $$\Lambda \fCenter A(\vec
y, \vec x)$$ where $\Lambda$ is as in \ref{lam} and $|\pi'|
\le p(|\pi|)$, for some polynomial $p$.
\end{lem}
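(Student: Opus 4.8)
The statement is a standard "mid-sequent" or "Herbrand-style" extraction: from a tree-like $G_1^*$ proof of a strict $\Sigma_1^q$ formula $\exists \vec z\, A(\vec z,\vec x)$ we want to strip off all the quantifiers and replace the existential witnesses by fresh extension variables whose defining equivalences are collected in $\Lambda$. I would prove this by induction on the structure of the $G_1^*$ proof $\pi$, proving a suitable generalization of the statement so that the induction goes through.

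First I would set up the right induction hypothesis. Because $\pi$ is tree-like and in free-variable normal form (Definition~\ref{def:fvnf}), and because of the subformula property, every formula occurring in $\pi$ is either $\Sigma_0^q$, or an ancestor of a cut formula (hence $\Sigma_1^q$ or $\Pi_1^q$, i.e.\ prenex with one block of quantifiers), or an ancestor of the final formula $\exists\vec z\,A$. The generalization I would prove is: for every sequent $\Gamma \fCenter \Delta$ occurring in $\pi$, there is a $PK$ proof of polynomial size of the sequent obtained from $\Gamma\fCenter\Delta$ by (a) prefixing an extension cedent $\Lambda$ on the left, (b) deleting leading existential quantifiers on $\Sigma_1^q$ formulas in $\Delta$ and leading universal quantifiers on $\Pi_1^q$ formulas in $\Gamma$, replacing the bound variables by suitable terms — either eigenvariables already present or extension variables from $\Lambda$ — and (c) similarly handling the $\Pi_1^q$-in-$\Delta$ / $\Sigma_1^q$-in-$\Gamma$ occurrences, where the quantifier is introduced by a $\forall$-right / $\exists$-left on a $\Sigma_0^q$ term $B$, by substituting that $B$. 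The key point is that in a tree-like proof each eigenvariable is used exactly once, so the $\exists$-right / $\forall$-left inferences that discharge witnesses can be read off unambiguously, and the witnessing terms they use are $\Sigma_0^q$ formulas; these are exactly the $B_i$ that go into $\Lambda$, with a fresh $y_i$ introduced for each.

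Then I would walk through the inference rules. Initial sequents translate to $PK$ initial sequents (after adding $\Lambda$ by weakening). Structural rules and propositional rules ($\wedge$, $\vee$, $\neg$, weakening, contraction, exchange) are immediate since they commute with the quantifier-stripping. The cut rule: a $\Sigma_1^q$ cut formula $\exists\vec w\, C(\vec w)$ becomes, after stripping, $C(\vec y\,')$ on both sides where $\vec y\,'$ are new extension variables; we cut on the quantifier-free $C(\vec y\,')$ in $PK$, having appended the defining equivalences $y_i' \iff (\text{witness})$ to $\Lambda$ — here I use that the left premise had the quantifier introduced by $\exists$-left with an eigenvariable, which by free-variable normal form is unique, so renaming it to a fresh extension variable and recording its definition is consistent. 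The four quantifier rules are where the extraction actually happens: $\exists$-right and $\forall$-left on a term $B$ simply substitute $B$ and disappear; $\exists$-left and $\forall$-right on an eigenvariable $x$ disappear as well, with $x$ surviving as a free variable of the $PK$ proof (it is either a parameter variable or becomes an extension variable $y_i$, with $y_i \iff B_i$ added to $\Lambda$ — the restriction that $B_i$ does not mention $y_i,\dots,y_n$ is guaranteed by processing inferences in proof order, i.e.\ from the leaves down).

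The main obstacle I expect is bookkeeping the extension cedent consistently across the whole tree: because $\pi$ is tree-like, different branches introduce disjoint eigenvariables (by free-variable normal form), so their $\Lambda$'s can be merged without clash, but I have to check that the acyclicity condition "$B_i$ does not mention $y_j$ for $j\ge i$" is preserved — this amounts to topologically ordering the extension variables by the order in which the corresponding $\exists$-left/$\forall$-right inferences occur along root-to-leaf paths, and arguing a definition never refers forward. I also need to verify the polynomial size bound: each inference of $\pi$ contributes $O(1)$ new $PK$ inferences plus the one-time cost of carrying $\Lambda$, whose length is at most the number of quantifier inferences in $\pi$, hence linear in $|\pi|$; since $PK$ sequents here have size $O(|\pi|)$, the total is polynomial. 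Finally, specializing the generalization to the final sequent $\fCenter \exists\vec z\, A(\vec z,\vec x)$ — whose existential is discharged by $\exists$-right inferences using $\Sigma_0^q$ witnesses, or whose $\vec z$ become parameter-like free variables $\vec y$ — yields exactly $\Lambda \fCenter A(\vec y,\vec x)$, completing the proof.
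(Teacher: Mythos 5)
The paper does not prove this lemma itself; it is quoted from Kraj\'{\i}\v{c}ek and from Cook--Nguyen, and your overall strategy (induction over the tree-like proof, stripping the quantifier block, and recording $\Sigma_0^q$ witnesses as extension variables in $\Lambda$) is indeed the strategy of the cited proof. However, there is a genuine gap at the one place where that proof actually has content: you dismiss contraction as ``immediate since it commutes with the quantifier-stripping,'' and you handle cut by appending ``the defining equivalence $y_i' \iff (\text{witness})$'' as if each $\Sigma_1^q$ succedent formula had a single well-defined witnessing term. Neither is true. A right-contraction on $\exists z\,C(z)$ has, after stripping, the form $\Gamma' \fCenter \Delta', C(y_1), C(y_2)$ with two distinct candidate witnesses, and to contract to a single $C(y)$ you must \emph{define $y$ by selection}, e.g.\ $y \iff (C(y_1)\AND y_1) \OR (\neg C(y_1) \AND y_2)$, and then give a short $PK$ derivation of $C(y_1)\OR C(y_2) \fCenter C(y)$ from that definition. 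The same merging problem arises at every cut whose side cedent $\Delta$ contains a $\Sigma_1^q$ formula: the two premises deliver two different witnesses for the same succedent formula, and the conclusion needs one. This selection step is precisely why extension variables (rather than direct substitution of terms) are needed, and it is the source of the $P$-completeness of witnessing $G_1^*$ that the paper discusses; an induction that treats contraction as trivial does not close.

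A secondary, smaller issue: once the selection definitions are in place, each inference contributes more than $O(1)$ $PK$ lines (the derivation of $C(y)$ from $C(y_1)\OR C(y_2)$ and the definition of $y$ has size proportional to $|C|$), so your size accounting should be redone, though the bound remains polynomial. Your worry about acyclicity of $\Lambda$ is legitimate and your resolution (order the extension variables by the order in which their defining inferences are processed, so definitions only refer backward) is the right one; note that the selector definitions above also respect this order since $y$ is defined from the earlier $y_1,y_2$.
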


The proof guaranteed by this lemma is also an $H^*$ proof
since every $PK$ proof is also an $H^*$ proof.
Extending this proof with a number of applications of $\exists$-right, we
get an $H^*$ proof of
\begin{equation}
\label{h_1}
\Lambda \fCenter \exists \vec z~ A(\vec z, \vec x).
\end{equation}

So now we need to find a way to remove the extension cedent
$\Lambda$.  This is done one formula at a time.  Suppose $y
\iff B$ is the last formula in $\Lambda$.  The key
observation is that $\exists y [ y \iff B ]$ is a $\sqcnf$
formula because the formula can be express as $\exists y [ (y
\OR \neq B) \AND (\neg y \OR B)]$.  So we can apply $\exists$-left with $y$ as the
eigenvariable to \eqref{h_1}.  The eigenvariable restriction
is met because $y$ is the last eigenvariable, and, therefore,
cannot appear anywhere else the extension cedent.  Then we cut $\exists y [ y
\iff B]$ after deriving $\fCenter \exists y [ y \iff B]$.
We can then do this for every formula is $\Lambda$ starting
at the end.
This proves the following theorem.
\begin{thm}
$H^*$ $p$-simulates $G^*_1$ for $\Sigma_1^q$ formulas.
\end{thm}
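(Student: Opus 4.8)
The plan is to show that $H^*$ $p$-simulates $G_1^*$ when the end formula is $\Sigma_1^q$, by reducing the problem to the already-established Lemma~\ref{EPK_G1} and then systematically eliminating the extension cedent using the fact that the formulas $\exists y[y \iff B]$ are $\sqcnf$. The key observation driving the whole argument is that the extension cedent $\Lambda$ of \eqref{lam} is built so that each $B_i$ mentions only variables $y_1,\dots,y_{i-1}$ (and parameter variables), so one can peel off the cedent from the right end, one formula at a time, each time using a cut on a $\sqcnf$ formula with the freshly-removed extension variable as eigenvariable.

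First I would invoke Lemma~\ref{EPK_G1} to obtain, from a $G_1^*$ proof $\pi$ of $\exists\vec z\,A(\vec z,\vec x)$, a polynomial-size $PK$ proof of $\Lambda \fCenter A(\vec y,\vec x)$. Since every $PK$ proof is trivially a (cut-free) $H^*$ proof, and a bounded number of $\exists$-right inferences introduces the existential block, I get a polynomial-size $H^*$ proof of \eqref{h_1}: $\Lambda \fCenter \exists\vec z\,A(\vec z,\vec x)$. The induction is then on $n$, the length of $\Lambda$. For the inductive step, write $\Lambda \same \Lambda', y_n \iff B_n$ where $B_n$ is $\Sigma_0^q$ and mentions none of $y_1,\dots,y_n$. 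I apply $\exists$-left with eigenvariable $y_n$ to the sequent $\Lambda', (y_n \iff B_n) \fCenter \exists\vec z\,A$, obtaining $\Lambda', \exists y_n(y_n \iff B_n) \fCenter \exists\vec z\,A$. The eigenvariable condition is satisfied precisely because $y_n$ does not appear in $\Lambda'$, in $B_n$, or in the final formula (as $B_n$ avoids $y_1,\dots,y_n$ and the $y$'s are fresh extension variables). Next I need a short derivation of $\fCenter \exists y_n(y_n \iff B_n)$: this is immediate, since from $B_n \fCenter B_n$ one gets $\fCenter (y_n \OR \neg B_n), (\neg y_n \OR B_n)$ after weakening, no — more carefully, one derives $\fCenter \exists y_n[(y_n \OR \neg B_n)\AND(\neg y_n \OR B_n)]$ by two $\exists$-right inferences using witnesses $B_n$ and (the negation is handled by cases on $B_n$, but propositionally one just uses the tautology $\fCenter \exists y(y \iff B)$ which has a trivial constant-size $PK$ proof). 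A cut on $\exists y_n(y_n \iff B_n)$ — which is a $\sqcnf$ formula by item~2 of Definition~\ref{sqcnf}, as the displayed CNF $(y_n \OR \neg B_n)\AND(\neg y_n \OR B_n)$ has the single $z$-literal pattern forced by an $x$-variable occurrence — then yields $\Lambda' \fCenter \exists\vec z\,A$. Iterating $n$ times removes $\Lambda$ entirely, giving an $H^*$ proof of $\fCenter \exists\vec z\,A(\vec z,\vec x)$.

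The size bound is routine: the $PK$ proof from Lemma~\ref{EPK_G1} is polynomial in $|\pi|$, each of the $n \le |\pi|$ elimination steps adds only a constant number of inferences plus a constant-size side derivation of $\fCenter \exists y_i(y_i \iff B_i)$ whose size is linear in $|B_i| \le |\pi|$, so the total is polynomial in $|\pi|$. One technical point worth stating carefully is that the cut formulas $\exists y_i(y_i \iff B_i)$ genuinely meet Definition~\ref{sqcnf}: writing the matrix as $(y_i \OR \neg B_i)\AND(\neg y_i \OR B_i)$ is not literally CNF when $B_i$ is a general $\Sigma_0^q$ formula, so strictly one should appeal to item~3 of Definition~\ref{sqcnf} (closure under substitution of $\Sigma_0^q$ formulas in $x$-variables into the genuine $CNF(2)$-style formula $\exists y_i[(y_i \OR \neg x)\AND(\neg y_i \OR x)]$) to see it lies in $\sqcnf$. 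The main obstacle — really the only subtle point — is verifying the eigenvariable condition at each step and making sure the extension variables are introduced in the correct (reverse) order so that when $y_i$ is chosen as eigenvariable it provably occurs nowhere else in the remaining sequent; this is exactly what the ordering restriction in the definition of extension cedent guarantees, and since $H^*$ imposes \emph{no} restriction on free variables in cut formulas, there is nothing further to check there. Assembling these observations completes the proof.
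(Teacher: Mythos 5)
Your proposal is correct and follows essentially the same route as the paper: apply Lemma \ref{EPK_G1}, view the resulting $PK$ proof as an $H^*$ proof, introduce the existential block by $\exists$-right, and then eliminate $\Lambda$ from the last formula backwards by $\exists$-left on the extension variable followed by a cut on the $\sqcnf$ formula $\exists y_i[y_i \iff B_i]$. Your added care about appealing to item~3 of Definition~\ref{sqcnf} (since the matrix is only $CNF(2)$ after substituting the $\Sigma_0^q$ formula $B_i$ for an $x$-variable) and about the size bound are refinements the paper leaves implicit, not a different argument.
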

This proof is not always a $GL^*$ proof because the
extension variables are not parameter variables, yet they
appear in cut formulas.

%\subsection{Definition of $GNL^*$}
%
%The set of formulas that are complete for $NL$ are the
%$\Sigma Krom$ formulas.  These formula are based on $2-CNF$
%formulas.  A $2-CNF$ formula is a $CNF$ formula where no
%clause has more than two variables.
%It is well known that determining whether or not a
%given $2-CNF$ formula is satisfiable is complete for $NL$.
%Based on this we get the following definition:
%\begin{defn}
%\label{krom}
%The set of formulas $\Sigma Krom$ is the smallest set
%\begin{enumerate}
%\item containing $\Sigma_0^q$,
%\item containing every formula $\exists \vec z, \phi( \vec
%z,
%\vec x )$
%where $\phi$ is a quantifier-free CNF
%formula in which no clause contains more that two
%$z$-literals, and is
%\item closed under substitution of $\Sigma_0^q$ formulas
%that contain only $x$-variables
%for $x$-variables.
%\end{enumerate}
%\end{defn}
%Then $GNL^*$ is defined as follows:
%\begin{defn}
%$GNL^*$ is the propositional proof system $G^*$ with cuts
%restricted to
%$\Sigma Krom$ formulas in which every free variable in a
%non-$\Sigma_0^q$ formula is a parameter variable.
%\end{defn}

%%%%%%%%%%%%%%%%%%%%%%%%%%%%%%%%%%%%%%%%%%%%%%%%%%%%%%%%%%%
%%%%%%%%%%%%%%%%%%%%%%%%%%%%%%%%%%%%%%%%%%%%%%%%%%%%%%%%%%%
%%%%%%%%%%%%%%%%%%%%%%%%%%%%%%%%%%%%%%%%%%%%%%%%%%%%%%%%%%%
%%%%%%%%%%%%%%%%%%%%%%%%%%%%%%%%%%%%%%%%%%%%%%%%%%%%%%%%%%%

\section{Adjusting $VL$}

In order to prove the translation theorem, we start with the theory
$VL$, which
corresponds to $L$ reasoning.  This theory was defined in
Section \ref{sec:def}.  The proof of the translation theorem
is similar to other proofs of its type.  We take an anchored
(or free-cut free) proof.  Then the cut formulas in
this proof will translate into the cut formulas in the
propositional proof.  If we use $VL$ for this, there are two
problem: (1) not all of the axioms of $VL$ translate into
$\sqcnf$ formulas and (2) the restriction of the free
variables in cut formulas
may not be met.  In the first subsection, we take care of the first
problem.  The second problem in taken care of in Section
\ref{sec:prob2}.

\subsection{A New Axiomatization For $VL$}

We want to reformulate the axioms of $VL$ so
they translate
into $\sqcnf$ formulas.
All of the 2BASIC axioms are
$\sob$, so they translate into
$\Sigma_0^q$ formulas, which are $\sqcnf$, so they do not
create any problems.
We only need to consider
$\sob$-COMP and $\Sigma_0^B$-rec.  We handle
$\Sigma_0^B$-COMP the same way Cook and
Morioka
did in \cite{CM04}.  That is, if the proof system is asked
to
cut the translation of an instance of the $\sob$-COMP axiom,
then the propositional proof is changed so that the cut
becomes
$\bigwedge_{i=0}^t [||\phi(i)|| \iff ||\phi(i)||]$, which is
$\sqcnf$.  To take care of $\sob$-rec, we define a new
theory that is equivalent to $VL$ by replacing the
$\sob$-rec axiom.

Informally the new
axiom says that there exists a string $Z$ that gives a
specific pseudo-path of
length $b$
in the graph with $a$ nodes and edge relation $\phi(i,j)$.
This path starts at node $0$.  If $(i,j)$ is an edge in this
path, then $j$ is the smallest number with an edge from
$i$ to $j$, or $j=a$ when there are no outgoing edges.  Note
that
the edge may not exist in the original graph when $j=a$.
This
is why we call it a pseudo-path.
If $(i,j)$ is the $w$th edge in the path, then
$Z(w,i,j)$ is true, and $Z(w,i',j')$ is false for every
other
pair.  
This is described by the $\sob$-edge-rec axiom
scheme:
\begin{equation}
\tag{$\sob$-edge-rec}
\label{erec}
\exists Z \le 1+\pair{b,a,a}[\rho_1 \AND \rho_2 \AND
\rho_3
\AND \rho_4 \AND \rho_5 \AND \rho_6 \AND \rho_7 \AND
\rho_8],
\end{equation}
where
\begin{equation*}
\begin{split}
\rho_1 \same & \forall j < a, \neg Z(0,0,j)
\OR \phi(0,j) \OR \exists l < j \phi(0,l) ) \\
\rho_2 \same & \forall j\le a \forall k < j, \neg Z(0,0,j)
\OR \neg \phi(0,k) \OR \exists l < k \phi(0,l) ) \\
\rho_3 \same & \forall i \le a \forall j \le a, i=0 \OR \neg
Z(0,i,j)
 \\
\rho_4 \same & \forall w < b \forall i \le a \forall j \le
a, \neg Z(w+1,i,j)
\\ & \OR \exists h \le a Z(w,h,i) \OR \neg \phi(i,j) \OR
\exists
l <
j \phi(i,l) \\
\rho_5 \same & \forall w < b \forall i\le a \forall j < a,
\neg
Z(w+1,i,j)
\OR \phi(i,j) \OR \exists l < j \phi(i,l) \\
\rho_6 \same & \forall w < b \forall i \le a \forall j \le a
\forall k
< j,
\neg Z(w+1,i,j)
\OR \neg \phi(i,k) \OR \exists l < k \phi(i,l) \\
\rho_7 \same & \exists i \le a \exists j \le a, Z(b,i,j) \\
\rho_8 \same & \forall \pair{w,i,j} \le \pair{b,a,a}, [ w >
b \OR i
> a \OR j > a ] \implies \neg Z(w,i,j)
\end{split}
\end{equation*}
and $\phi(i,j)$ is a $\sob$ formula that does not mention
$Z$, but may have other free variables.  It is not
immediately obvious that the axiom says what it is suppose to, so
we will take a closer look.

Let $Z$ be a string that witnesses the axiom.  We want to
make sure $Z$ is the path described above.  Looking at
$\rho_3$, we see the path starts at $0$.  Suppose $Z(0,0,j)$
is true.  We must show that $j$ is the first node adjacent
to $0$.  This follows from $\rho_1$, which guarantees
$\phi(i,j)$ is true when $j<a$, and $\rho_2$, which
guarantees
$\phi(i,k)$ is false when $k<j$.  A similar argument can be
made with $\rho_5$ and $\rho_6$ to show that every node is
the smallest node adjacent to its predecessor.  To make
sure the path is long enough, we have $\rho_7$, which says
there is a $b$th edge, and $\rho_4$, which says if there is
a $(w+1)$th edge there is a $w$th.  As you may have noticed,
there are parts of this formula that semantically are not
needed.  For example, the $\exists l < j \phi(0,l)$ in
$\rho_1$ is not needed.  It is used to make sure
the axiom translates into a $\sqcnf$ formula.  We add
$\rho_8$ to make sure there is a unique $Z$ that witnesses
this axiom.

\begin{notation}
For simplicity, $\psi_\phi$ is the
$\sob$ part of the $\sob$-edge-rec axiom instantiated with
$\phi$. Note this includes the bound on the size of $Z$.  So
the axiom
can be
written as $\exists Z \psi_\phi$.
\end{notation}
\begin{defn}
$VL'$ is the theory axiomatized by the axioms of $V^0$, the
$\sob$-edge-rec axioms, and Axiom (\ref{const}).  The
language of
$VL'$ is the language of $V^0$ plus a string
constant $C$ with defining axiom
\begin{equation}
\label{const}
\size C = 0
\end{equation}
\end{defn}
We add the string constant to the language so we can put
$VL'$ proofs
in
free variable normal form (below).  We do not use the
constant for any other reason.  Also, in the translation,
we can treat $C$ as a string variable with $n=0$.

\begin{lem}
\label{equiv}
The theory $VL$ is equivalent to $VL'$.
\end{lem}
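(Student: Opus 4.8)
The plan is to show the two theories prove the same theorems in their common language (that of $V^0$) by showing each interprets the other. The easier direction is $VL \proves VL'$: since $VL'$ is axiomatized by $V^0$, the $\sob$-edge-rec axioms, and Axiom~(\ref{const}), and $V^0 \subseteq VL$, it suffices to derive each instance of $\sob$-edge-rec in $VL$ (the constant $C$ with $\size C = 0$ is a harmless definitional extension, interpreted by the empty string, which $V^0$ proves exists). To derive $\sob$-edge-rec from $\sob$-rec, I would first apply Lemma~\ref{lem:start} to the graph $\phi$ to get a path of length $b$ starting at node $0$ (after first massaging $\phi$ so that every node has out-degree at least $1$, e.g.\ by adding an edge to $a$ from every node with no outgoing edge). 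Then I encode that path as the string $Z$ demanded by $\sob$-edge-rec: set $Z(w,i,j)$ true exactly when the $w$th edge of the path goes from $i$ to $j$ and $j$ is the $\min$imal $\phi$-successor of $i$ (or $j=a$ if none), and false otherwise. The string $Z$ is obtained by $\sob$-COMP, which $VL$ has. Verifying $\rho_1$ through $\rho_8$ is then a routine $\sob$-induction-free check, each $\rho_k$ following directly from how $Z$ was defined together with the defining property of $\min$; $\rho_8$ holds because we only ever set bits inside the box $\pair{b,a,a}$.

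For the converse, $VL' \proves VL$, I need to derive $\sob$-rec (equivalently, by Lemma~\ref{lem:start}'s argument, it is enough to get the path-existence consequence) from $\sob$-edge-rec, plus recover $\sob$-COMP (which $V^0$ already gives). Given a graph $\phi$ with $\forall x \le a\,\exists y \le a\,\phi(x,y)$, I apply $\sob$-edge-rec to $\phi$ to obtain the witnessing string $Z$. From $Z$ I define $f(a,w,Z)$ — the $w$th node of the pseudo-path — and must check it satisfies $\phi(f(a,w,Z), f(a,w{+}1,Z))$ for all $w \le b$. The key point is that the pseudo-path given by $\sob$-edge-rec only ever uses the edge $(i,j)$ when $j$ is the \emph{minimal} $\phi$-successor of $i$, and under the hypothesis $\forall x\le a\,\exists y\le a\,\phi(x,y)$ such a minimal successor always exists and is $< a+1$, so the ``$j=a$ with no edge'' escape clause never triggers; hence every edge of the pseudo-path is a genuine edge, and $\phi(f(a,w,Z),f(a,w{+}1,Z))$ holds. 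This requires unwinding the $\rho_k$: $\rho_3,\rho_1,\rho_2$ handle $w=0$, and $\rho_4,\rho_5,\rho_6$ handle the inductive step, with $\rho_7$ guaranteeing the path reaches length $b$. Defining $f$ from $Z$ as $\min_x(Z(w,x,\,\cdot\,)$-related$)$ is $\sob$, so this is all available in $V^0$ once $Z$ is in hand.

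The main obstacle is the second direction: one must argue carefully that the pseudo-path is in fact a genuine path \emph{under the out-degree hypothesis}, which means showing the ``$j=a$'' disjuncts in $\rho_4$–$\rho_6$ are never the reason a bit of $Z$ is set. Concretely, from $\rho_5$ (resp.\ $\rho_1$) we get that any edge $(i,j)$ used with $j<a$ satisfies $\phi(i,j)$; the worry is an edge $(i,a)$. But $\rho_2$/$\rho_6$ force that if $(i,j)$ is used then no $k<j$ has $\phi(i,k)$, and the hypothesis supplies some $\phi$-successor $\le a$, so the $\min$imal one is $\le a$; combined with $\rho_1$/$\rho_5$ insisting on $\phi(i,j)\OR \exists l<j\,\phi(i,l)$ one pins down that the used $j$ equals that minimal successor, which is a real edge. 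Threading these disjunctions together — and doing it with only the $\sob$ machinery $V^0$ provides, using no induction beyond what $V^0$ proves — is the delicate part; everything else is bookkeeping. Once the path-existence consequence of $\sob$-rec is recovered, the general $\sob$-rec axiom follows by the same reindexing trick used in the proof of Lemma~\ref{lem:start}, and the equivalence is complete.
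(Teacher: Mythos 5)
Your second direction ($VL' \proves \sob$-rec) is essentially the paper's argument and is fine: every pseudo-path edge $(i,j)$ with $j<a$ is a real edge by $\rho_1/\rho_5$ (together with $\rho_2/\rho_6$ to discharge the $\exists l<j\,\phi(i,l)$ disjunct), and an edge $(i,a)$ is forced to be real because $\rho_2/\rho_6$ give $\forall k<a\,\neg\phi(i,k)$ while the out-degree hypothesis supplies some successor, which must then be $a$ itself. (Your closing remark about needing a ``reindexing trick'' to recover full $\sob$-rec is unnecessary: $\sob$-rec permits the path to start anywhere, so a path starting at $0$ is already an instance of it; only a change of encoding from edges to nodes via $\sob$-COMP is needed.)

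The first direction, however, has a genuine gap. You apply Lemma \ref{lem:start} to $\phi$ massaged only so that every node has out-degree at least $1$, and then try to repair the resulting path into the string $Z$ by declaring $Z(w,i,j)$ true only when $j$ is additionally the \emph{minimal} $\phi$-successor of $i$. But the path supplied by Lemma \ref{lem:start} is an arbitrary path in the massaged graph: at some step $w$ it may follow a non-minimal edge $(v_w,v_{w+1})$, in which case your filter sets no bit at level $w$ at all, the chain required by $\rho_4$ and $\rho_7$ is broken, and $Z$ fails to witness the axiom. (Concretely: edges $0\to 1$, $0\to 2$, $1\to 1$, $2\to 2$; the path $0,2,2,\dots$ is legitimate but $2$ is not the minimal successor of $0$.) The filtering must happen \emph{before} invoking path existence, not after. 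The paper does exactly this: it uses $\sob$-COMP to define the graph $Y(i,j) \iff (j<a \implies \phi(i,j)) \AND \forall k<j\,\neg\phi(i,k)$, i.e.\ the deterministic graph consisting only of minimal-successor edges (with $j=a$ as the default when there is none), verifies $\forall i\le a\,\exists j\le a\,Y(i,j)$, and only then applies Lemma \ref{lem:start} to $Y$; any path in $Y$ then automatically satisfies $\rho_1$--$\rho_8$. Without this reordering your construction of $Z$ does not go through.
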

\begin{proof}
To prove the two theories are equivalent, we must show that
$VL$
proves the
$\sob$-edge-rec axiom and that $VL'$ proves the $\sob$-rec
axiom.  Since the two axioms express similar ideas, this is
not surprising.

To show that $VL$ proves the $\sob$-edge-rec axiom, let
$\phi(i,j)$ be
any $\sob$ formula.  Then let $Y$ be the string such
that
$Y(i,j)
\iff (j<a \implies \phi(i,j)) \AND \forall k<j \neg
\phi(i,k)$.
This $Y$ exists by $\sob$-COMP.
We can think of $Y$ as the graph that contains only the
edges the
$\sob$-edge-rec axiom would use.  Since $VL$ proves the
$X-MIN$
formula, it follows that $VL$ proves $\forall i \le a,
\exists j
\le a, Y(i,j)$.  This means there exists a path of
length
$b$
in $Y$ that starts at node $0$ Lemma \ref{lem:start}.  It is
a simple task to verify the $b$ edges in
this path satisfy the $\sob$-edge-rec axiom for $\phi$.

To show that $VL'$ proves the $\sob$-rec axiom, let
$\phi(i,j)$ be a $\sob$ formula such that 
$\forall i \le
a \exists j\le a, \phi(i,j)$.  By the $\sob$-edge-rec axiom,
there
is a pseudo-path of length $b$ in the graph $\phi$.  We need
to
show
that this is a real path.  Suppose $(i,j)$ is an edge in
the
path.  If $j<a$, then $(i,j)$ is in the graph by $\rho_1$
and
$\rho_5$.  Otherwise, $j=a$, and $\forall k<j \neg
\phi(i,k)$.
This implies $\phi(i,j)$ since every node has out-degree at
least 1.  This means every edge in
the
pseudo-path exists, and there exists a path of length $b$.
\end{proof}

The next step is to be sure the translation of the
$\Sigma_0^B$-edge-rec axiom is
a
$\sqcnf$ formula.  This is done by a careful inspection of
the formula.
\begin{lem}
The formula $||\exists Z \psi_\phi(a,b,Z)||$ is a $\sqcnf$
formula.
\end{lem}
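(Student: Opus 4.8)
The plan is to unfold $||\exists Z\,\psi_\phi||$ far enough to recognise it as an instance of Definition~\ref{sqcnf}. Fix the translation parameters and write $A = val(a)$, $B = val(b)$. The outermost string quantifier $\exists Z$ translates to a block of propositional quantifiers, one variable $p^Z_{\pair{w,i,j}}$ per bit of the three-dimensional array $Z$ that lies within the bound $1+\pair{b,a,a}$, and what remains is the translation of the quantifier-free matrix $\psi_\phi \same \rho_1 \AND \cdots \AND \rho_8$. Each $\rho_k$ is a string of bounded number quantifiers applied to a disjunction, so $||\rho_k||$ is a conjunction, over all values of those quantifiers, of disjunctions, and each such disjunction is built from (i)~at most one negated $Z$-literal $\neg p^Z_{\pair{\cdot}}$; (ii)~a block of positive $Z$-literals coming from an $\exists h \le a~Z(\cdot)$ subformula, which occurs only in $\rho_4$ and $\rho_7$; (iii)~literals $\pm||\phi(i',j')||$; and (iv)~the constants $\true,\false$ produced by the decidable number comparisons $i=0$, $w>b$, and so on. Deleting the $\false$-disjuncts and the clauses containing a $\true$-disjunct, and temporarily replacing each subformula $||\phi(i',j')||$ — a $\Sigma_0^q$ formula whose only variables are of the form $p^{X_k}_\ell$, since $\phi$ does not mention $Z$ — by a fresh propositional variable $q_{i',j'}$, turns the matrix into a genuine CNF $\Psi_0$ in the variables $\{p^Z_{\pair{w,i,j}}\}$ together with $\{q_{i',j'}\} \cup \{p^{X_k}_\ell\}$. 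By part~3 of Definition~\ref{sqcnf}, substituting the formulas $||\phi(i',j')||$ back for the $x$-variables $q_{i',j'}$, it suffices to verify parts~1 and~2 for $\exists \vec{p^Z}\,\Psi_0$: whenever a single $Z$-literal occurs in two distinct clauses of $\Psi_0$, those clauses must contain complementary occurrences of some $q$-variable.

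The rest is a finite check, organised by the repeated $Z$-literal. Positive $Z$-literals are of two sorts. Those from $\rho_7$ lie in a single clause, so no pair arises. For those from $\rho_4$: a positive literal $p^Z_{\pair{w,\beta,d}}$ with $w<B$ occurs exactly in the $\rho_4$-clauses with leading indices $(w,d,e)$, one for each $e \le A$ (it is one of the disjuncts of $\exists h \le a~Z(w,h,d)$); two of these, for $e<e'$, are resolved by $q_{d,e}$, negated in the first via the disjunct $\neg\phi(d,e)$ and positive in the second via the disjunct $\exists l < e'~\phi(d,l)$. Negative $Z$-literals come in three families. The literal $\neg p^Z_{\pair{0,c,d}}$ with $c \neq 0$ occurs only in a single unit clause of $\rho_3$. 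The literal $\neg p^Z_{\pair{0,0,d}}$ occurs in the $\rho_1$-clause for $d$ (when $d<A$) and in the $\rho_2$-clauses for the pairs $(d,k)$, $k<d$; any two of these are resolved by some $q_{0,k}$, positive in one through a disjunct $\exists l<\cdot~\phi(0,l)$ and negative in the other through $\neg\phi(0,k)$. The literal $\neg p^Z_{\pair{w+1,c,d}}$ occurs only in one $\rho_4$-clause, one $\rho_5$-clause (when $d<A$), and the $\rho_6$-clauses for $k<d$, all with leading indices $(w,c,d)$; the $\rho_4$- and $\rho_5$-clauses are resolved by $q_{c,d}$ (negated in the former via $\neg\phi(c,d)$, positive in the latter via $\phi(c,d)$), and every other pair among them is resolved by some $q_{c,k}$ with $k<d$, present positively through a $\exists l<\cdot~\phi(c,l)$ disjunct and negatively through $\neg\phi(c,k)$. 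No clause contains a $Z$-literal together with its negation, so this is exhaustive.

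The main obstacle is purely the bookkeeping: one has to be sure the list of clauses containing each $Z$-literal is complete, and in particular that the out-of-range unit clauses generated by $\rho_8$ never share a variable with a clause coming from $\rho_1$ through $\rho_7$ — which holds because the latter mention only in-range indices of $Z$. The conceptual content is slight and is the point already flagged in the discussion following the $\sob$-edge-rec axiom: the seemingly superfluous disjuncts $\exists l<j~\phi(i,l)$ in $\rho_1$, $\rho_4$, $\rho_5$ and the analogous structure of $\rho_2$ and $\rho_6$ were inserted precisely to supply these resolvents, so once the clause inventory is assembled the verification is mechanical.
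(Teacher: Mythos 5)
Your proof is correct and follows essentially the same route as the paper's: reduce to the case where each $||\phi(i',j')||$ is a single fresh variable, observe the matrix is a CNF over the $Z$-bits, inventory the clauses containing each repeated $Z$-literal, and exhibit the conflicting $\phi$-occurrences (e.g.\ $\neg\phi(c,d)$ against $\exists l<j~\phi(c,l)$) before invoking closure under $\Sigma_0^q$ substitution. Your clause-by-clause bookkeeping is in fact more explicit than the paper's, which only sketches the inspection.
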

\begin{proof}
First we assume $\phi(i,j) \same X(i,j)$ for some variable
$X$.
It is easy to see that $||\psi_{X(i,j)}(a,b,Z)||[a,b; t, a*a]$,
where $t$ is the bound on $Z$ given in the $\sob$-edge-rec
axiom, is a CNF
formula.  Note that we assigned $|Z|=t$ and $|X|=a*a$.  We
now need to make sure the clauses have the correct
form.  This is done by examining each occurrence of a
bound
literal.  To verify this, the proof will require a careful
inspection
of the definition of the axiom.
The only bound variables are those that come from $Z$.
These are $p^Z_{w,i,j}$, which we will refer to as
$z_{w,i,j}$.  The only free variables are those
corresponding to $X$.  These variables will be referred to as
$x_{i,j}$.  We will first look at the positive occurrences of
$z_{w,i,j}$.
On inspection, we can observe that, when $w<b$, every
occurrence of
$z_{w,i,j}$ must be in clauses that are part of the
translation
of $\rho_4$.  We want to show that every clause that is
part of the
translation of $\rho_4$ has conflicting free variables.
This
is true since $\neg X(i,j_1)$ will conflict with one of
the
variables from $\exists l<j_2, X(i,l)$ when $j_1 <
j_2$.
When
$w=b$, the variable $z_{b,i,j}$ appears once in $\rho_7$.
Now we turn to the negative occurrences.  When $w=0$, the
variable $z_{0,i,j}$ will
appear negatively in the clauses corresponding to
$\rho_1$,
$\rho_2$, and $\rho_3$.  If $i > 0$, it will appear only
in
the clauses corresponding to
$\rho_3$ and will appear only once.  If $i=0$, the
variable
$z_{0,0,j}$
will not appear in the translation of $\rho_3$ because the
$i=0$
part will satisfy the clause.  It is easy to observe that
every occurrence of the variable in the translation of
$\rho_1$
and $\rho_2$ will
have a conflicting free variable.  Examine the
construction $X(0,j) \OR \exists l < j X(0,l)$ at
the
end
of $\rho_1$ and $\neg X(0,k) \OR \exists l<k X(0,l)$
at the
end of $\rho_2$.  A similar argument can be
made with $\rho_4$, $\rho_5$, and $\rho_6$ when $w > 0$.
This implies that the translation is a $\sqcnf$ formula
when
$\phi(i,j) \same X(i,j)$.  When $\phi$ is a more general
formula, the translation is the formula in the first case
with the free variables substituted with the translation
of
$\phi$, which will be $\Sigma_0^q$.  Since $\sqcnf$
formulas
are
closed under this type of substitution, the formula is
$\sqcnf$ in
all cases.
\end{proof}

\subsection{Normal Form For $VL'$}
\label{sec:prob2}

In this section, we want to find a normal form for $VL'$
proofs that makes sure the translation of $VL'$ proofs
satisfy the variable restriction for $GL^*$.
The normal form we want is {\em cut variable normal form}
(CVNF)
and is
defined in the following.

\begin{defn}
A formula $\phi(Y)$ is bit-dependent on $Y$ if there is an
atomic sub-formula of $\phi$ of the form $Y(t)$, for some
term $t$.
\end{defn}

\begin{defn}
\label{fvnf}
A proof is in free variable normal form if (1) every
non-parameter free variable $y$ or $Y$ that appears in the
proof is
used as an eigenvariable exactly once and (2) parameter
variables are never used as eigenvariables.
\end{defn}
Note that if a proof is in free variable normal form we can
assume that every instance of the non-parameter variable $Y$
(or $y$) is in an ancestor of the sequent where $Y$ is used
as an eigenvariable.  If it is not, we can replace $Y$ with
the constant $C$ in all those sequents.

\begin{defn}
A cut in a proof is anchored if the cut formula is an
instance of an axiom.
\end{defn}

\begin{defn}
\label{cvnf}
A $VL'$ proof $\pi$ is in \em{cut variable normal form} if
$\pi$
is (1) in free variable normal form, (2) every cut with a
non-$\sob$ cut formula is anchored, and (3) no
cut formula that is an instance of the $\sob$-edge-rec axiom
is bit-dependent on a non-parameter free string variable.
\end{defn}
It is known how to find a proof with the first two
properties \cite{CN06,Buss98}, and this part will not be
repeated here.  Instead we focus on how to find a proof
satisfying the third property.
\begin{thm}
\label{norm_form}
For every $\Sigma_1^B$ theorem of $VL'$ there exists a $VL'$-proof of that formula in CVNF.
\end{thm}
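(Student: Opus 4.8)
The plan is to start from a $VL'$-proof $\pi$ of the $\Sigma_1^B$ formula already satisfying properties (1) and (2) of CVNF; such a proof exists by the cited results \cite{CN06,Buss98}. So every cut with a non-$\sob$ cut formula is anchored, hence the cut formula is an instance of the $\sob$-edge-rec axiom $\exists Z \psi_\phi(a,b,Z)$ for some parameters $a,b$ and some $\sob$ formula $\phi$. What must be fixed is property (3): such a cut formula may be bit-dependent on a non-parameter free string variable. This can only happen through $\phi$: the witnessed variable $Z$ is bound, so the offending sub-formulas are occurrences $Y(t)$ inside $\phi(i,j)$, where $Y$ is a non-parameter string variable. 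By free variable normal form, every occurrence of such a $Y$ lies in an ancestor of the (unique) sequent where $Y$ is used as an eigenvariable.

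The key idea is to replace the problematic instance of $\sob$-edge-rec by an equivalent one that is bit-dependent only on parameter variables, by \emph{internalizing} $\phi$'s dependence on $Y$. Concretely, since by Lemma~\ref{lem:start}-style reasoning and $\sob$-COMP the graph edge relation $\phi(i,j)$ with its free variables can be coded into a single fresh string variable, I would introduce a new string variable $W$ together with the $\sob$-COMP axiom defining $W(i,j) \iff \phi(i,j)$ (more precisely, $W$ encoding the minimized edge relation as in the proof of Lemma~\ref{equiv}), and use $\sob$-edge-rec instantiated with the atomic formula $W(i,j)$ instead of with $\phi$. The axiom instance $\exists Z \psi_{W(i,j)}(a,b,Z)$ is bit-dependent only on $W$ and on the parameter variables $a,b$; and $W$, being defined by a $\sob$-COMP instance whose cut (by the Cook--Morioka trick already described in this section) is turned into the trivially $\sqcnf$ formula $\bigwedge_i[\,||\,\dots\,||\iff||\,\dots\,||\,]$, is not a cut variable of the new proof at all. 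The derivation that $\phi$ and $W$ define the same graph is $\sob$-provable, so all the $\sob$ cuts needed to re-route the proof are harmless.

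The steps, in order: (i) start with $\pi$ in free variable normal form with all non-$\sob$ cuts anchored; (ii) pick a topmost anchored cut whose cut formula $\exists Z\psi_\phi$ violates (3); (iii) using free variable normal form, locate the sub-proof above the eigenvariable sequents for the offending non-parameter variables, and, working inside that sub-proof, introduce the defining $\sob$-COMP axiom for the fresh $W$ and prove $\forall i,j\,(W(i,j)\iff(\text{minimized }\phi(i,j)))$; (iv) replace the cut on $\exists Z\psi_\phi$ by a cut on $\exists Z\psi_{W}$, deriving the two endsequents of the new cut from the old ones together with the $W$-equivalence, using only $\sob$ cuts; (v) restore free variable normal form by renaming the newly introduced eigenvariables; (vi) iterate over the finitely many remaining violating cuts. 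Because each step strictly decreases the number of non-parameter free string variables on which some anchored-cut formula is bit-dependent (the new $W$ is never a cut variable, and no new dependence on old variables is created), the process terminates, yielding a $VL'$-proof in CVNF of the same $\Sigma_1^B$ formula.

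The main obstacle is step (iv): one must check that replacing $\exists Z\psi_\phi$ by $\exists Z\psi_{W}$ can be done while preserving anchoredness and without introducing a non-$\sob$ cut that is not anchored. This requires that the equivalence "$Z$ witnesses $\psi_\phi$" $\iff$ "$Z$ witnesses $\psi_{W}$" be provable using only $\sob$ reasoning once $W(i,j)\iff\phi(i,j)$ is available — which it is, since $\psi_\phi$ is built from $\phi$ by $\sob$ operations (bounded quantifiers, $\wedge$, $\vee$, $\neg$) and $\psi$ is monotone in this substitution. A secondary subtlety is bookkeeping: after the substitution, the new proof must still be in free variable normal form, which forces careful renaming of the eigenvariables associated with $W$ and with $Z$ in the modified instance, and one must confirm that the endsequent (the original $\Sigma_1^B$ theorem) is untouched because $W$ and the new $Z$ are non-parameter and hence do not occur there. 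None of these is deep, but the verification that $\psi_\phi$'s syntactic shape is preserved under the substitution — so that the replacement really is an instance of the \emph{same} axiom scheme — is where the care is needed.
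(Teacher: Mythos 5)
There is a genuine gap, and it is at the heart of what condition (3) of CVNF is for. Your plan replaces the offending cut formula $\exists Z\,\psi_{\phi}$ (bit-dependent on the non-parameter variable $Y$) by $\exists Z\,\psi_{W}$, where $W$ is a \emph{fresh} string variable defined by $\sob$-COMP from $\phi$. But $W$ does not occur in the endsequent, so it is itself a non-parameter free string variable, and $\psi_{W}$ is manifestly bit-dependent on $W$. The new $\sob$-edge-rec cut formula therefore still violates condition (3); you have only renamed the offending variable, not eliminated the dependence. This also breaks your termination measure: the number of non-parameter string variables on which some anchored cut formula is bit-dependent does not decrease, since each round trades $Y$ for a fresh $W$. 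The restriction in (3) exists precisely because, under translation, bit-dependence on a non-parameter variable becomes a free propositional variable inside a quantified cut formula, which $GL^*$ forbids; hiding $\phi(i,j)$ behind $W(i,j)$ changes nothing at the propositional level.

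The paper's actual proof eliminates the dependence on $Y$ outright, via Lemma \ref{main}: since the anchored cut on $\exists Y\,\gamma(Y)$ forces $\gamma$ to be an instance of $\sob$-COMP or of $\sob$-edge-rec, one can \emph{unfold the definition of $Y$ into $\phi$}. The $\sob$-COMP case is the easy one (substitute the defining formula for each occurrence $Y(s)$ — essentially a corrected version of your idea, but substituting \emph{into} $\phi$ rather than abstracting $\phi$ away). The case you do not address is the hard one: when $\gamma(Y)\same\psi_{\phi'}(Y)$, the bits of $Y$ are the output of a recursion and are not $\sob$-definable from $\phi'$, so no single $\sob$-COMP instance can define your $W$ without itself depending on $Y$ or on another recursion. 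The paper handles this by representing both recursions as branching programs and composing them into a single $\sob$-edge-rec instance $\psi_{\phi_1}$ with $\phi_1$ not bit-dependent on $Y$, together with a $\sob$-COMP extraction $\phi_2$. Finally, note that even the corrected unfolding can introduce bit-dependence on \emph{other} non-parameter variables occurring in $\gamma$, which is why the paper must order the $Y_i$ by eigenvariable use and induct along that order; your iteration has no analogous safeguard.
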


The proof of this theorem is the most technical in this
paper.  At a high level, it amounts to showing
$\sob$-edge-rec is closed under substitution of strings
defined by $\sob$-edge-rec and $\sob$-COMP. We
begin with an anchored proof that is in free variable
normal form.  We want to change every cut that violates
condition (3) in the definition of CVNF.  Consider the
proof given in Figure \ref{bad_proof}.
\begin{figure}
\begin{prooftree}
\AxiomC{$P$}
\noLine
\UIC{$\ddots\vdots\iddots$}
\noLine
\UI$\exists Z \psi_{\phi(Y)}(Z), \gamma(Y), \Gamma \fCenter
\Delta$
\AXC{$\ddots\vdots\iddots$}
\noLine
\UI$\gamma(Y), \Gamma \fCenter \Delta, \exists Z
\psi_{\phi(Y)}(Z)$
\BI$\gamma(Y), \Gamma \fCenter \Delta$
\UI$\exists Y \gamma(Y), \Gamma \fCenter \Delta$
\end{prooftree}
\caption{Example of a proof that is not in CVNF}
\label{bad_proof}
\end{figure}
This is a simple example of what can go wrong.  The general
case is handled in the same way, so we will only consider
this case.

Since all $\Sigma_1^B$ cut formulas are anchored and the
$\exists Y \gamma(Y)$ must eventually be cut, it is be an
instance of $\sob$-COMP or $\sob$-edge-rec.  So you can
think of $\gamma$ as a formula that completely defines $Y$.
Then we want to change $\phi(Y)$ so that it does not mention
$Y$ explicitly, but instead uses the definition of $Y$ given
by $\gamma$.  Note that, for this to be true, the final formula must be $\Sigma_1^B$; otherwise, $Y$ could have been used as an eigenvariable in a $\forall$-right inference and would not be well defined.
\begin{lem}
\label{main}
For any $\sob$ formula $\phi(Y)$, there exist $\sob$
formulas $\phi_1$ and $\phi_2$ such that $\phi_1$ is not
bit-dependent on $Y$ and $V^0$ proves the sequent
$$\gamma(Y), \psi_{\phi_1}(Z), \forall i < t [ Z'(i) \iff
\phi_2(Z) ] \fCenter \psi_{\phi(Y)}(Z').$$
\end{lem}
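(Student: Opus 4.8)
The plan is to split on the two possible forms of the defining formula~$\gamma$. In the situation of Theorem~\ref{norm_form} the cedent $\exists Y\,\gamma(Y)$ is eventually cut and every non-$\sob$ cut is anchored, so $\gamma(Y)$ is an instance either of $\sob$-COMP or of $\sob$-edge-rec, and in both cases $\gamma$ pins down $Y$ uniquely. The $\sob$-COMP case is immediate: if $\gamma(Y)\same |Y|\le s \AND \forall i<s[Y(i)\iff\theta(i)]$ with $\theta$ not mentioning $Y$, let $\phi_1$ be $\phi$ with every atom $Y(r)$ replaced by $r<s\AND\theta(r)$ and take $\phi_2(i,Z)\same Z(i)$; then $V^0$ proves $\gamma(Y)\supset\forall i\,\forall j[\phi(Y)(i,j)\iff\phi_1(i,j)]$, so a $\phi_1$-pseudo-path ($\phi_1$ being $Y$-free) is literally a $\phi(Y)$-pseudo-path and $Z'=Z$. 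So from now on assume $\gamma(Y)\same\psi_\chi(Y)$, an instance of $\sob$-edge-rec: here $Y$ encodes, for a $Y$-free $\sob$ edge relation $\chi$, the pseudo-path $0=p_0,p_1,\dots,p_c$ obtained by iterating the least-$\chi$-successor map from~$0$, and $V^0$ proves from $\gamma(Y)$ that $Y(\pair{w,p,q})$ holds exactly when $w\le c$, $p=p_w$ and $q=p_{w+1}$.

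The idea for this case is that the pseudo-path the $\sob$-edge-rec axiom for $\phi(Y)$ asks for can be produced by one log-space computation in which each bit $Y(\pair{w,p,q})$ needed to evaluate $\phi$ is recomputed on demand by re-walking the $\chi$-path $w$ steps from~$0$. So I would describe an explicit log-space procedure $P$, with input $\chi$ and the parameters, that sequentially evaluates $\phi$ at a pair of nodes $(i,j)$ --- storing only the current values of $\phi$'s bounded-quantifier variables, a fixed number of $O(\log n)$-bit registers --- and, on reaching an atom $Y(\pair{w,p,q})$, invokes a sub-routine that iterates the least-$\chi$-successor map $w$ times and reports whether the node reached and its least successor are $p$ and $q$; with this, $P$ computes the least-$\phi(Y)$-successor map and iterates it from~$0$ to emit the required pseudo-path. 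Every register of $P$ is $O(\log n)$ bits, so its configurations are bounded numbers and its one-step transition relation is a $\sob$ formula in $\chi$ that does not mention $Y$; padding it so that configuration $0$ is the initial configuration and every configuration has out-degree exactly one, this is our $\phi_1$, and a $\phi_1$-pseudo-path $Z$ of the polynomial length $N$ bounding $P$'s run traces $P$'s entire computation. Finally, $\phi_2(k,Z)$ says that $k=\pair{w,i,j}$ with $w\le b$, $i\le a$, $j\le a$ and that $Z$ contains a configuration of $P$ that is about to record the $w$-th output edge as $(i,j)$; this is $\sob$ in $Z$, and with $t=1+\pair{b,a,a}$ the string $Z'$ given by $\forall k<t[Z'(k)\iff\phi_2(k,Z)]$ is exactly the path $P$ emits.

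What remains --- and where essentially all the difficulty lies --- is the $V^0$ derivation of the stated sequent. From $\psi_{\phi_1}(Z)$ one shows by $\sob$-induction along the trace that the configurations recorded in $Z$ are indeed $P$'s successive configurations, and then proceeds through a short chain of $\sob$-inductions: (i) the least-$\chi$-successor routine recorded in $Z$ returns the least successor; (ii) hence the $Y$-lookup routine returns the value $\gamma(Y)$ assigns to $Y(\pair{w,p,q})$ --- the one place $\gamma(Y)$ is used, to identify the walked node with $p_w$; (iii) hence the sequential evaluation of $\phi(Y)(i',j)$ recorded in $Z$ agrees with the genuine truth value of $\phi(Y)(i',j)$; (iv) hence the least-$\phi(Y)$-successor loop in $Z$ is correct; (v) hence, by induction on the outer step, after $u$ outer iterations the current node is the $u$-fold least-$\phi(Y)$-successor of~$0$. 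From (v) one reads off that the edges recorded in $Z$, i.e.\ $Z'$, satisfy $\rho_1,\dots,\rho_8$ for $\phi(Y)$, which is exactly $\psi_{\phi(Y)}(Z')$. Each invariant here is $\sob$ because ``the configuration at a given position of $Z$ has a given value'' is $\sob$ in $Z$, so $\sob$-IND --- available in $V^0$ --- applies; the delicate points are purely the bookkeeping needed to index ``the $m$-th configuration belonging to a particular sub-routine invocation within $Z$'' and the off-by-one handling of the path endpoints ($\rho_7$) and of the size bound ($\rho_8$). Thus the main obstacle is not a single idea but the faithful encoding of this nested, recomputing log-space procedure as an explicit $Y$-free $\sob$ graph, together with the verification --- using only $\sob$-induction --- that its trace computes the $\phi(Y)$-pseudo-path: morally, that the composition of two log-space algorithms is log-space, carried out inside $V^0$.
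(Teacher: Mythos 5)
Your proposal is correct and follows essentially the same route as the paper: the $\sob$-COMP case is handled by direct substitution of the defining formula for the atoms $Y(s)$, and the $\sob$-edge-rec case by composing the two path-walking computations into a single $Y$-free $\sob$-definable transition graph (the paper phrases this as flattening the evaluation of $\phi$ into a branching program with atomic $Y$-queries and composing it with the branching program that walks the $\chi$-path, which is exactly your ``recompute each bit of $Y$ on demand'' log-space procedure). Your sketch of the chain of $\sob$-inductions needed for the $V^0$ verification is in fact more explicit than the paper's, which simply asserts that $V^0$ proves the construction works.
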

\begin{proof}
This proof is divided into two cases.  In the first case, we
assume
\begin{equation}
\label{star}
\gamma(Y) \same \size Y \le t \AND \forall i < t [Y(i) \iff
\phi'(i)].
\end{equation}
That is, $\exists Y \gamma(Y)$ is an instance of $\sob$-COMP.  We
know $Y$ must appear in that position because it eventually gets
quantified.  In this case, $\phi_1$ is $\phi$ with every
atomic formula of the form $Y(s)$ replaced by $s < t \AND
\phi'(s)$, and $\phi_2$ is the formula $Z(i)$.  We can prove that there
exists a $V^0$ proof of \eqref{star} by structural induction
on $\phi$. 

For the second case, we assume $\gamma(Y) \same
\psi_{\phi'}(Y).$  That is, $Y$ is the pseudo-path in the
graph of $\phi'$.  The first step is to define branching
programs that compute $Y$ and $Z'$ (the pseudo-path in the graph of
$\phi$) using $Y$.  Then $\phi_1$ is the $\Sigma_0^B$
description of the composition of these branching programs,
and $\phi_2$ is the $\Sigma_0^B$ formula that extracts $Z'$
from the run of this last branching program.

\begin{defn}
A branching program is a nonempty set of nodes labeled with
triples $(\alpha, i,j)$, where $\alpha$ is a $\sob$ formula
over
some set of variables and $0 \le i,j \le t$ for some term
$t$
that depends only on the inputs to the program.  Semantically, if
a node $u$ is labeled with $(\alpha,
i,j)$, then, when the branching program is at node $u$, it will
go to node $i$, if $\alpha$ is true, or node $j$,
otherwise. The
initial node is always $0$.
\end{defn}
Note that a branching program is essentially a graph with a
special form, and, as with graphs, we use families of
branching programs that can be described by $\Sigma_0^B$
formulas.  However, we will not give the explicit
construction of the formula; we leave it to the reader.

The first step is to introduce the initial branching program
$BP_0$ that computes $Z'$.  The nodes of $BP_0$ are interpreted as triples
$\pair{w,i,j}$.  An invariant for this branching program is
that, if we reach the node $\pair{w,i,j}$, then the $w$th
node of $Z'$ is $i$ and $\forall k <j \neg \phi(i,k)$.
At each node, we check if $j$ is the next node. Let $a$
be the maximum value of a node and $b$ be the length of the
path.  This means the number of nodes in $BP_0$ is bound
by
$\pair{b,a,a}$.  So now to define the labels.  If $j <
a$, then $\pair{w,i,j}$ is labeled with $( \phi(i,j),
\pair{w+1,j,0}, \pair{w,i,j+1})$.  If $j=a$, then
$\pair{w,i,j}$ is labeled with $( \true, \pair{w+1,j,0}, 0 )$.  It is
easy to see that the invariants hold and that $Z'$ can be
obtained from a path in $BP_0$ using $\Sigma_0^b$-COMP.

The branching program that computes $Y$ is constructed the
same way except $\phi'$ is used instead of $\phi$.  Let this
branching program be $BP$.

Moving on to the second step, we now want to simplify $BP_0$ so that every node whose label is
bit-dependent on $Y$
is labeled with an
atomic formula.  This is done to simplify the construction
of the composition.  We start with $BP_0$.  Then, given $BP_i$, we
define $BP_{i+1}$ by removing one connective in a
node of $BP_i$ that is not in the right form.  Let node $n$
in $BP_i$ be labeled with
$(\alpha, u_1,u_2)$.  The construction is divided into five cases: one
for each possible outer connective.

{\em Case $\alpha \same \neg \beta$:} $BP_{i+1}$ is the same as
$BP_i$ except node $n$ is now labeled with $(\beta,u_2,u_1)$.

{\em Case $\alpha \same \beta_1 \AND \beta_2$:} 
The nodes of $BP_{i+1}$ are interpreted as
pairs $\pair{u,v}$.  The node $\pair{u,0}$ 
corresponds to node $u$ in $BP_i$. 
The label of $\pair{n,0}$ becomes $(\beta_1,
\pair{n,1}, \pair{u_2,0})$ and the label for $\pair{n,1}$ is
$(\beta_2, \pair{u_1,0}, \pair{u_2,0})$.  Notice that
$\pair{n,1}$ is used as an intermediate node while evaluating
$\alpha$.

{\em Case $\alpha \same \beta_1 \OR \beta_2$:}
$BP_{i+1}$ is defined as in the previous case, with a few minor modifications.  This
case is left to the reader.

{\em Case $\alpha \same \exists z \le t \beta(z)$:}
The nodes become pairs as in the previous case, but this time the
labels are different.
The node $\pair{n,i}$ is labeled with
$(\beta(i), \pair{u_1,0}, \pair{n,i+1})$, when $i < t$.
If $i=t$, the node is labeled with $(\beta(i), \pair{u_1,0},
\pair{u_2,0})$.  In this case, the branching program is
looking for an $i$ that satisfies $\beta(i)$.

{\em Case $\alpha \same \forall z \le t \beta(z)$:}
This case is similar to the previous case.  The only
difference is the branching program is looking for an $i$
that falsifies $\beta(i)$.

Let $BP_n$ be the final branching program in this
construction above.  We now construct a branching
program $BP'$ that is the composition of $BP_n$ and $BP$.
The nodes of $BP'$ are pairs $\pair{u_1,u_2}$ where the
first element corresponds to a node in $BP_n$ and the second
element corresponds to a node in $BP$.  

Suppose node $u_1$ in
$BP_n$ is labeled with $(\alpha, v_1, v_2)$.  If $\alpha$ is
not bit-dependent on $Y$, then the node $\pair{u_1,0}$ is
labeled with $(\alpha, \pair{v_1,0}, \pair{v_2,0})$.

It is also possible that $\alpha$ is bit-dependent on $Y$;
in which case, $\alpha$ is of the form $Y(w,i,j)$.  Let
$(\beta, w_1, w_2)$ be the label for node $u_2$ in $BP$.
Then the node $\pair{u_1, u_2}$ is labeled as follows:

\begin{tabbing}
$(\beta, \pair{u_1,w_1}, \pair{u_1,w_2})$, \= if $u_2 \le
\pair{w,a,a}$ and $u_2 \neq \pair{w,i,j}$, \\
$(\beta, \pair{v_1,0}, \pair{v_2,0})$ \> if $u_2 =
\pair{w,i,j}$, and \\
$(\true, \pair{v_2,0}, \pair{v_2,0})$  \> otherwise.
\end{tabbing}

In this case, we are using the second element to run $BP$
and determine if the $w$th edge in the path is $(i,j)$.  If
it is, we move on to $\pair{v_1,0}$, and, if it is not, we
move on to $\pair{v_2,0}$.  In the labels above, the first
line corresponds to running $BP$.  The second line
corresponds to a check if $(i,j)$ is the $w$th edge.  The
third line is used when we have already found the $w$th edge
and it is not $(i,j)$.

It is not difficult to see that it is possible to construct
$\phi_1$ (a
$\Sigma_0^B$ formula describing $BP'$), and $\phi_2$ (a
formula extracting $Z'$ from a run of $BP'$.  Moreover,
$V^0$ proves that this construction works.
\end{proof}
%%%%%%%%%%%%%%%%%%%%%%%%%%%%%%%%%%%%%%%%%%%%%%%%%%%%%%%%%%%%
%%%%%%%%%%%%%%%%%%%%%%%%%%%%%%%%%%%%%%%%%%%%%%%%%%%%%%%%%%%%
Using this lemma, we are able to change the proof in Figure
\ref{bad_proof} into the proof in Figure \ref{new_proof}.
In that proof, $P'$ is the proof $P$ with the rules that
introduced $\exists Z$ ignored (renaming variables if
necessary), and $Q$ is an anchored $V^0$
proof, which we know exists by the lemma above.
This gives us a new proof of the same formula that still
satisfies properties (1) and (2) in Definition \ref{cvnf}
and it contains one less cut that is bit-dependent on $Y$.
\begin{figure}
\begin{prooftree}
\AxiomC{$P'$}
\noLine
\UIC{$\ddots\vdots\iddots$}
\noLine
\UI$\psi_{\phi(Y)}(Z), \gamma(Y), \Gamma \fCenter
\Delta$
\AxiomC{$Q$}
\noLine
\UIC{$\ddots\vdots\iddots$}
\noLine
\UI$\gamma(Y),  \psi_{\phi_1}(Z), \tau(Z')
\fCenter \Delta,
\psi_{\phi(Y)}(Z)$
\doubleLine
\BI$\psi_{\phi_1}(Z), \tau(Z'),
\gamma(Y), \Gamma \fCenter \Delta$
\UI$\psi_{\phi_1}(Z), \exists Z' \tau(Z'),
\gamma(Y), \Gamma \fCenter \Delta$
\end{prooftree}

\begin{prooftree}
\AX$\psi_{\phi_1}(Z), \exists Z' \tau(Z'),
 \gamma(Y), \Gamma \fCenter \Delta$
\AX$\fCenter
\exists Z'\tau(Z') $
\doubleLine
\BI$\psi_{\phi_1}(Z), \gamma(Y), \Gamma \fCenter \Delta$
\UI$\exists Z \psi_{\phi_1}(Z), \gamma(Y), \Gamma \fCenter
\Delta$
\end{prooftree}

\begin{prooftree}
\AX$\exists Z \psi_{\phi_1}(Z), \gamma(Y), \Gamma \fCenter
\Delta$
\AX$\fCenter \exists Z \psi_{\phi_1}(Z)$
\doubleLine
\BI$\gamma(Y), \Gamma \fCenter
\Delta$
\UI$\exists Y \gamma(Y), \Gamma \fCenter \Delta$
\end{prooftree}

\caption{Modification of the proof in Figure
\ref{bad_proof}.  The formula $\tau(Z')$ is used to replace $\forall i
< t [ Z'(i) \iff \phi_2(Z) ]$}
\label{new_proof}

\end{figure}

%%%%%%%%%%%%%%%%%%%%%%%%%%%%%%%%%%%%%%%%%
%% Proof of main theorem
%%%%%%%%%%%%%%%%%%%%%%%%%%%%%%%%%%%%%%%%%%
Using this manipulation, we prove Theorem \ref{norm_form}.

\begin{proof}[Proof of Theorem \ref{norm_form}]
It would be nice to be able to simply say we can repeatedly
apply
the manipulations above and eventually the proof will be in
CVNF, but this is not obvious.  In the
manipulation, if $\gamma(Y)$ is bit-dependent on a string
variable other than $Y$,
then the new $\sob$-edge-rec cut formula is bit-dependent on
that variable.  This
includes non-parameter string variables.  So we need to
state our
induction hypothesis more carefully.

Let $Y_1,\dots,Y_n$
be all of the non-parameter free string variables that appear
in $\pi$ ordered such that the variable $Y_i$ is
used
as a eigenvariable
before $Y_j$ for $i<j$.  This implies $Y_i$ does not appear
in
$\gamma(Y_j)$
in the manipulations above.  So now suppose no
$\sob$-edge-rec cut formula is bit-dependent on the
variables $Y_1,\dots,Y_k$, for
some $k<n$.  Then
we can manipulate $\pi$ such that the same holds for the
variables
$Y_1,\dots,Y_{k+1}$. To accomplish this, we simply
manipulate
every $\sob$-edge-rec cut formula that is bit-dependent on
$Y_{k+1}$ as
described above.  Since $Y_1,\dots,Y_k$ cannot appear in
$\gamma(Y_{k+1})$, those variables will not violate the
condition.
So by induction, we can get a proof that is in CVNF.
\end{proof}

\section{Translation Theorem}

We are now prepared to prove the translation theorem.  The
proof is done by induction on
the
length of the proof.  For the base case, we need to prove
the
translation of the axioms of $VL'$.  We know the $\sob$-COMP
and the 2BASIC axioms have polynomial-size $G_0^*$ proofs
from other translation theorems \cite{CM04}.  This means
they also have polynomial-size $GL^*$ proofs.
Axiom (\ref{const}) is easy to prove since it
translates to $\fCenter \true$.  We still need to show how
to prove the
$\sob$-edge-rec
axiom in $GL^*$.  Recall that we write the axiom as $\exists
Z \psi_\phi(a,b,Z)$.  Note that the axiom does have a bound 
on $Z$, but it has been omitted since the specific bound is
not important.

\begin{lem}
\label{gl_edge_rec}
The formula $||\exists Z \psi_\phi(a,b,Z)||$ has a $GL^*$
proof
of size $p(a,b)$ for some polynomial $p$.
\end{lem}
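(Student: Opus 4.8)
The plan is to give an explicit polynomial-size $GL^*$ proof of $||\exists Z \psi_\phi(a,b,Z)||$ by directly constructing, within the proof system, the witnessing string $Z$ that encodes the pseudo-path. The translation $||\exists Z \psi_\phi||$ has the form $\exists \vec z \, \Theta$ where $\Theta$ is the (quantifier-free) translation of $\psi_\phi$ with propositional variables $z_{w,i,j}$ for the bits of $Z$ and $x$-variables coming from $\phi$; we have already shown this is a $\sqcnf$ formula. The key point is that, once the values $a$ and $b$ are fixed (as they are in the translation), the pseudo-path is a concrete finite object, and for each bit $z_{w,i,j}$ there is a $\Sigma_0^q$ formula $D_{w,i,j}(\vec x)$ — essentially the translation of ``the $w$th edge of the pseudo-path is $(i,j)$'' — that defines its correct value in terms of the free $x$-variables. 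So the strategy is: (1) prove $\fCenter ||\psi_\phi||[D_{0,0,0}/z_{0,0,0}, \dots]$, i.e. the translation of $\psi_\phi$ with every $z$-variable replaced by its defining $\Sigma_0^q$ formula; then (2) use $\exists$-right to reintroduce the existential quantifiers $\exists \vec z$, obtaining $\fCenter ||\exists Z \psi_\phi||$.

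For step (1), I would build the $D_{w,i,j}$ by recursion on $w$: $D_{0,0,j}$ says ``$||\phi||(0,j) \wedge \bigwedge_{k<j}\neg||\phi||(0,k)$'', $D_{0,i,j}$ is $\false$ for $i\neq 0$, and $D_{w+1,i,j}$ says ``$(\exists h\le a\, D_{w,h,i}) \wedge ||\phi||(i,j) \wedge \bigwedge_{k<j}\neg||\phi||(i,k)$'' (with the $j=a$ boundary case handled by dropping the $\phi$-conjunct). Each $D_{w,i,j}$ is $\Sigma_0^q$ and has size polynomial in $a,b$ since there are only $O(\pair{b,a,a})$ of them and each refers to boundedly many earlier ones; one must check the cumulative size stays polynomial, which it does because the recursion depth is $b$ and the branching is bounded by $a$. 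Then each of the conjuncts $\rho_1,\dots,\rho_8$ of $\psi_\phi$, after the substitution, becomes a $\Sigma_0^q$ tautology (a true statement about the fixed finite pseudo-path): $\rho_3,\rho_8$ are immediate from $D_{0,i,j}=\false$ for $i\neq0$ and the out-of-range bits being $\false$; $\rho_1,\rho_2,\rho_5,\rho_6$ follow because the definition of $D$ builds in the minimality and edge conditions; $\rho_4$ follows because $D_{w+1,i,j}$ explicitly contains $\exists h\le a\, D_{w,h,i}$; and $\rho_7$ requires proving $\exists i\le a\exists j\le a\, D_{b,i,j}$, which needs the hypothesis that every node has out-degree $\ge 1$ — but in the translation this hypothesis is part of the antecedent (the axiom as translated is really $\forall x\le a\exists y\le a\,\phi(x,y) \supset \exists Z\,\psi_\phi$), so $\rho_7$ is provable from it. Each of these $\Sigma_0^q$ tautologies has a polynomial-size $G_0^*$ (hence $GL^*$) proof, and conjoining them gives $\fCenter ||\psi_\phi||[\vec D/\vec z]$.

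For step (2), applying $\exists$-right repeatedly to reintroduce $\exists z_{w,i,j}$ is sound because $\exists$-right in $G^*$ allows substituting any $\Sigma_0^q$ formula for the quantified variable, and here we are substituting $D_{w,i,j}$ (which mentions only $x$-variables, so no capture issues arise). No cuts on non-$\Sigma_0^q$ formulas are used, and all $\Sigma_0^q$ cuts are fine for $GL^*$; the eigenvariable restriction is vacuous since we introduce no eigenvariables. The main obstacle is the bookkeeping for step (1): verifying that the substituted translations of $\rho_1$ through $\rho_8$ really are $\Sigma_0^q$ tautologies with short proofs, and in particular keeping the size of the nested formulas $D_{w,i,j}$ polynomial — the recursion must be set up so that $D_{w+1,i,j}$ refers to the formulas $D_{w,h,i}$ as already-constructed subformulas (sharing, via the DAG structure implicit in how we write the proof) rather than re-expanding them, so that the total proof size is $\mathrm{poly}(a,b)$ rather than exponential. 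Since the construction mirrors the branching-program/path construction already used in Lemma \ref{main} and Lemma \ref{lem:start}, the verification is routine but tedious, and I would present it by checking one representative $\rho_k$ in detail and leaving the rest to the reader.
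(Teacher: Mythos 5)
Your overall strategy --- explicitly defining each bit of the pseudo-path by a $\Sigma_0^q$ formula $D_{w,i,j}$ and introducing all the quantifiers by $\exists$-right at the very end --- founders on exactly the size issue you flag, and the repair you propose is not available. A quantified propositional formula is a string: every occurrence of a subformula must be written out, and there is no DAG-sharing \emph{inside} a formula. Since $D_{w+1,i,j}$ must literally contain the disjunction $\bigvee_{h\le a} D_{w,h,i}$ (and for $w\ge 1$ these disjuncts are all nontrivial), the sizes satisfy $|D_{w+1,\cdot,\cdot}|\approx a\cdot|D_{w,\cdot,\cdot}|$, so $|D_{b,i,j}|=\Theta(a^{b})$, exponential in $b$. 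The ``DAG structure implicit in how we write the proof'' concerns sequents, not subformula occurrences within a single formula; the $\Sigma_0^q$ formulas substituted by $\exists$-right appear in full in the sequent. The only way to share the $D_{w,h,i}$ is to name them with extension variables $y_{w,h,i}$ and cut formulas of the form $\exists y\,[y\iff \dots]$, but those cut formulas contain the non-parameter free variables $y_{w,h,i}$ and are therefore forbidden in $GL^*$ --- this is precisely the $H^*$ versus $GL^*$ distinction drawn in Section 3, and erasing it would make the lemma trivially true for a much stronger system.

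The paper instead performs the induction on the path length \emph{inside} the proof system: it gives short proofs of $\fCenter ||\exists Z\psi_\phi(a,1,Z)||$ and of $||\exists Z\psi_\phi(a,b',Z)|| \fCenter ||\exists Z\psi_\phi(a,b'+1,Z)||$, and then cuts the $\sqcnf$ formula $||\exists Z\psi_\phi(a,b',Z)||$ successively for $b'=1,\dots,b-1$. In the inductive step the new edge is witnessed by a formula $B_{b'+1,i,j}$ built from the \emph{atomic} variables $z_{b',k,i}$ (made available by $\exists$-left on the left cut formula) together with $A_{i,j}=||\phi(i,j)||$, so nothing nests and every formula stays polynomial-size; the cuts are legal because each $||\exists Z\psi_\phi(a,b',Z)||$ is $\sqcnf$ and its free variables (those of $||\phi||$) are parameter variables that do not change with $b'$. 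A secondary slip in your write-up: the $\sob$-edge-rec axiom carries no out-degree hypothesis --- that is the point of the pseudo-path, which defaults to node $a$ when a node has no outgoing edge --- so $\rho_7$ must be, and is, provable unconditionally.
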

\begin{proof}
The proof is done by a brute force induction.  We prove, in
$GL^*$, that, if there exists a pseudo-path of length $b$,
then
there
exists a pseudo-path of length $b+1$.  It is easy to prove
there exists a pseudo-path of length $0$.  Then with repeated
cutting we get our final result.  The entire path is
quantified, so we do not cut formulas with non-parameter free
variables.

Given variables that encode a
path of length $b$, we can define $\Sigma_0^q$ formulas that
determine the next edge.
Let $A_{i,j} \same
||\phi(i,j)||$.
Since $\phi$ is a $\sob$ formula, $A_{i,j}$ is a
$\Sigma_0^q$
formula.  To prove that there is an edge that starts the path,
consider the formula
$$B_{0,0,j} \same A_{0,j} \AND \bigwedge
\limits_{k=0}^{j-1} \neg
A_{0,k},$$
when $j<a$, and
$$B_{0,0,a} \same \bigwedge \limits_{k=0}^{a-1} \neg
A_{0,k}.$$  It is
easy to see $B_{0,0,j}$ is true for exactly one $j\le a$.
This is
also provable in $GL^*$.  This shows that $GL^*$ has a
polynomial-size proof of
$$||\exists Z \psi_\phi(a,1,Z)||.$$

For the inductive step, if there is a path of length $b$
and
the
path is given by the variables $z_{w,i,j}$, then the
witnesses for the next edge
are defined as follows:
$$B_{b+1,i,j}
\same \bigvee_{k=0}^a z_{b,k,i} \AND A_{i,j} \AND
\bigwedge_{k=0}^{j-1} \neg A_{i,k},$$
when $j<a$, and
$$B_{b+1,i,a}
\same \bigvee_{k=0}^a z_{b,k,i} \AND
\bigwedge_{k=0}^{a-1} \neg A_{i,k}.$$
Using the fact that exactly one $z_{b,i,j}$ is true, we
can
prove
in $GL^*$ that exactly one $B_{b+1,i,j}$ is true.
This shows that $GL^*$ has a polynomial-size proof of
$$||\exists Z \psi_\phi(a,b,Z)|| \fCenter ||\exists Z
\psi_\phi(a,b+1,Z)||.$$

So now we are able to prove $||\exists Z
\psi_\phi(a,b,Z)||$
for
any $b$ by successive cutting.  Recall that $||\exists Z
\psi_\phi(a,b,Z)||$ is a $\sqcnf$ formula, and note that
the
free
variables in $||\exists Z \psi_\phi(a,b,Z)||$ do not
change
as
$b$ changes.  This means we are allowed to do the cut.
\end{proof}

This can be used to prove the translation theorem.
\begin{thm}[$VL$-$GL^*$ Translation Theorem]
Suppose $VL$ proves $\exists Z < t \phi(\vec x, \vec X,Z)$,
where $\phi$ is a $\Sigma_0^B$ formula.
Then there are polynomial-size $GL^*$ proofs of $||\exists
Z<t
\phi(\vec x, \vec X,Z)||[\vec n]$.
\end{thm}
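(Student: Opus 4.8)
The plan is to follow the standard route for translation theorems of this kind — translate an anchored proof line by line — but to start from the theory $VL'$ and the normal form of Section 4, which are exactly what is needed to make the cut formulas legal for $GL^*$. By Lemma \ref{equiv}, $VL'$ also proves $\exists Z < t\,\phi(\vec x,\vec X,Z)$, and by Theorem \ref{norm_form} we may fix a $VL'$-proof $\pi$ of this formula that is in cut variable normal form. We regard $\pi$ as a sequent proof whose non-logical initial sequents are the members of 2BASIC, the instances of $\sob$-COMP, the instances of $\sob$-edge-rec, and Axiom (\ref{const}); since $\pi$ is anchored, every cut formula is either $\sob$ or an instance of one of these axioms, and by the usual sub-formula property of anchored proofs every formula occurring in $\pi$ is bounded. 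For each choice of the parameters $\vec m,\vec n$ we build a $GL^*$ proof by replacing each sequent $\Gamma \fCenter \Delta$ of $\pi$ with $||\Gamma||[\vec m;\vec n] \fCenter ||\Delta||[\vec m;\vec n]$ and each inference with the corresponding propositional derivation. Since every formula of $\pi$ is the translation of a bounded formula, each of these derivations has size polynomial in $\vec m,\vec n$, and $\pi$ has a fixed number of inferences, so the total size is polynomial.

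For the base case we need polynomial-size $GL^*$ proofs of the translations of the initial sequents. The translations of the 2BASIC axioms and of the instances of $\sob$-COMP have polynomial-size $G_0^*$ proofs by the known translation results \cite{CM04}, and every $G_0^*$ proof is a $GL^*$ proof; Axiom (\ref{const}) translates to $\fCenter \true$; and the translation $||\exists Z\,\psi_\phi(a,b,Z)||$ of an instance of $\sob$-edge-rec has a polynomial-size $GL^*$ proof by Lemma \ref{gl_edge_rec}.

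For the inductive step, the logical and structural rules translate exactly as in the proof of the $V^i$--$G_i^*$ translation theorem; the quantifier rules act on the blocks $p^Y_0,\dots,p^Y_{m-2}$, and, because $\pi$ is in free variable normal form, any non-parameter variable used as an eigenvariable occurs only in ancestors of the sequent at which it is introduced, so the eigenvariable condition of $G^*$ is satisfied. The only case special to $GL^*$ is cut. If the cut formula $A$ is $\sob$, then $||A||$ is $\Sigma_0^q$, which is a legal $GL^*$ cut formula with no restriction on its free variables. If $A$ is not $\sob$, then $\pi$ being in cut variable normal form forces the cut to be anchored, so $A$ is an instance of $\sob$-COMP or of $\sob$-edge-rec. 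When $A$ is an instance of $\sob$-COMP we apply the modification of Cook and Morioka: the cut is replaced by one on $\bigwedge_{i=0}^{t}[\,||\phi(i)|| \iff ||\phi(i)||\,]$, a quantifier-free (hence $\sqcnf$) tautology, adjusting the two premise derivations accordingly. When $A$ is an instance $\exists Z\,\psi_\phi(a,b,Z)$ of $\sob$-edge-rec, its translation is $\sqcnf$ by the lemma established in Section 4.1; moreover, clause (3) of cut variable normal form says that $\phi$ is not bit-dependent on any non-parameter free string variable, so every free propositional variable of $||\exists Z\,\psi_\phi||$ has the form $p^{X_i}_j$ for a parameter string variable $X_i$ of $\pi$, i.e.\ one occurring free in the endsequent $\exists Z < t\,\phi(\vec x,\vec X,Z)$, and these are precisely the parameter variables of the $GL^*$ proof under construction. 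Hence every non-$\Sigma_0^q$ cut formula of the translated proof is $\sqcnf$ and contains only parameter variables free, so the result is a genuine $GL^*$ proof.

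The main obstacle is precisely the cut case: making sure that the $\sqcnf$-ness and the variable restriction of $GL^*$ both hold is what forces all the preparation of Section 4 — replacing $\sob$-rec by $\sob$-edge-rec so that the recursion axiom translates to a $\sqcnf$ formula, the Cook and Morioka handling of $\sob$-COMP, and Theorem \ref{norm_form} so that the surviving non-$\sob$ cut formulas are anchored and are not bit-dependent on non-parameter string variables. A secondary point to verify is that when $\phi$ in an edge-rec instance mentions a parameter string variable through atoms $X_i(s,s')$, substituting $||X_i(s,s')||$ into the basic CNF template still yields a $\sqcnf$ formula; this is immediate since $\sqcnf$ is closed under substitution of $\Sigma_0^q$ formulas in the $x$-variables.
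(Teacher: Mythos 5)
Your proposal is correct and follows essentially the same route as the paper: pass to a $VL'$ proof in cut variable normal form via Lemma \ref{equiv} and Theorem \ref{norm_form}, induct on the proof with the base case given by Lemma \ref{gl_edge_rec} and the preceding remarks, and split the cut case into $\Sigma_0^B$, $\Sigma_0^B$-COMP (handled by the Cook--Morioka substitution), and $\Sigma_0^B$-edge-rec (legal because the translation is $\Sigma CNF(2)$ and clause (3) of CVNF guarantees the free-variable restriction). The only cosmetic difference is that you describe the COMP case via the trivial-tautology reformulation from Section 4.1 while the paper's proof phrases it as substituting the witnessing $\Sigma_0^q$ formulas for the quantified variables; these are the same technique.
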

\begin{proof}
By Theorem \ref{equiv} and Theorem \ref{norm_form}, there exists
a
$VL'$ proof $\pi$ of $\exists Z < t \phi(\vec x, \vec X,Z)$
that is in
CVNF.

We proceed by induction on the depth of $\pi$.  The base
case
follows from Lemma \ref{gl_edge_rec} and the comments that
precede it.  The inductive step is divided into cases: one
for
each rule.  With the exception of cut, every rule can be
handled
the same way it is handled in the $V^1$-$G_1^*$ Translation
Theorem
(Theorem 7.51, \cite{CN06}), and will not be repeated here.

When looking at the cut rule, there are three cases.  If the
cut
formula
is $\sob$, then we simply cut the corresponding $\Sigma_0^q$
formula in
the $GL^*$ proof.  If the cut formula is not $\sob$, then it
must
be anchored since the proof is in CVNF.
This
means the cut formula is an instance of $\sob$-edge-rec
or an instance of $\sob$-COMP.  First suppose it is an
instance
of $\sob$-edge-rec.  Then we are able to cut the
corresponding
formula in the $GL^*$ proof. This is because the axiom
translates into a
$\sqcnf$ formula, and the free variables in the
translation are parameter variables since the formula is not
bit-dependent on
non-parameter string variables.

When the cut formula is an instance of $\sob$-COMP, we apply
the same transformation as in the proof of the
$VNC^1$-$G_0^*$ translation theorem \cite{CM04}. That
is,
we remove the quantifiers by replacing the variables with
$\Sigma_0^q$ formulas that witness the quantifiers.  This
change does not effect other cuts since their free
variables are parameter variables or they are $\Sigma_0^q$
formulas and remain $\Sigma_0^q$ after the
substitution.
The current cut
formula becomes a $\Sigma_0^q$ formula, which can be cut.
Note that, since there are a constant number of cuts of this
axiom, the substitution does not cause an exponential
increase in the size of the formulas.
\end{proof}

\section{Proving Reflection Principles}

In this section, we show that $GL^*$ does not capture
reasoning for a higher complexity class.  This is done by
proving, in $VL$, that $GL^*$ is sound.  This idea comes
from \cite{Cook75}, where Cook showed that $PV$ proves
extended-Frege is sound, and \cite{KP90}, where Krajicek and
Pudlak showed $T^i_2$ proves $G_i$ is sound for $i>0$.

We will actually show that $\vl$ proves $GL^*$ is
sound.  The idea behind the proof is to give an
$\lfl$ function that witnesses the quantifiers in the proof.
Then we prove, by $\Sigma_0^B(\lfl)$-IND, that this
functions witness every sequent, including the final
sequent.  Therefore the formula is true.

We start by giving an algorithm that
witnesses $\sqcnf$ formulas in $L$ when the formula is true.
This algorithm is the algorithm given in \cite{Johannsen04} with a
few additions to find the satisfying assignment.
We describe an $\lfl$ function that corresponds to this
algorithm and prove it correct in
$\vl$.  We then
use this function to find an $\lfl$ function that
witnesses $GL^*$ proofs, and prove it correct in $\vl$.

\subsection{Witnessing $\sqcnf$ Formulas}

Let $\exists \vec z A(\vec x, \vec z)$ be a $\sqcnf$
formula.  We will describe how to find a witness for this
formula.  
We assume that $A$ is a $CNF$ formula.  That
is, the substitution of the $\Sigma_0^q$ formulas has not
happened.  The general case is essentially the same.

The first thing to take care of is the encoding of $A$.
We will not go through this is detail.  Suffice it
to say that parsing a formula can be done in $TC^0$
\cite{CM04}, and, as long as we are working in a theory that extends
$TC^0$ reasoning, we can use any reasonable encoding.
We will refer to the $i$th
clause of $A$ as $C^A_i$.  A clause will be viewed as a set
of literals.  A {\em literal} is either a variable or its
negation.  So we will write $l \in C^A_i$ to mean that the
literal $l$ is in the $i$th clause of $A$.  Since the
parsing can be done in $TC^0$, these formulas can be defined
by $\Sigma_0^B(\lfl)$ formulas.  An assignment will also be
viewed as a set of literal.  If a literal is in the set,
then that literal is true.  So an assignment $X$ satisfies a
clause $C$ if and only in $X \intersect C \neq \emptyset$.

Given values for $\vec x$, we first simplify $A$ to get a
$CNF(2)$ formula.  We will refer to the simplified formula
as $F$.  This can
be done using the $\lfl$ function defined by the following
formula:
$$l \in C_i^F \iff l \in C_i^A \AND X \intersect C_i^A =
\emptyset,$$
where $X$ is the assignment to the free variables.
From the definition of a $\sqcnf$ formula, $\vl$ can easily
prove that $F$ now encodes a $CNF(2)$ formula.  In fact, it
can be shown that no literal appears more than once.  A
satisfying assignment to this formula is the witness we
want.  Mark Bravermen gave an algorithm for finding this assignment \cite{Bravermen03}, but we use a different algorithm that is easier to formalize.

Before we describe the algorithm that finds this assignment,
we go through a couple definitions.  First, a {\em pure
literal} is a literal that appears in the formula, but its
negation does not.  Next the formula imposes an order on
the literals.  We say a literal $l_1$ {\em follows } a
literal $l_2$ if the clause that contains $l_1$
also contains $l_2$, and $l_1$ is immediately to the right
of $l_2$, circling to the beginning if $l_2$ is the last
literal.  More
formally:
\begin{equation*}
\begin{split}
follows( l_1, l_2, F ) \iff \exists i,  l_1 \in C_i^F \AND l_2 \in C_i^F & \AND \forall l_3 ( l_2 < l_3 < l_1 \implies l_3 \not \in C_i^F ) \\
                                                                       & \AND \forall l_3 ( l_3 < l_1 < l_2 \implies l_3 \not \in C_i^F ) \\
                                                                       & \AND \forall l_3 ( l_1 < l_2 < l_3 \implies l_3 \not \in C_i^F ) \\
\end{split}
\end{equation*}
Note
that if a clause contains a single literal then that literal
follows itself.  
Also, note that literals are coded by numbers and $l_1 < l_2$
means the number coding $l_1$ is less then the number coding
$l_2$.

To find the assignment to $F$, we will go through the
literals in the formula in a very specific order.  Starting
with a literal $l$ that is not a pure literal, the {\em next literal} is the
literal that follows $\overline l$:
$$next(l_1,F) = l_2 \iff follows( l_2, \overline l_1,F).$$
Note that if $l_1$ is a pure literal, then there is no next
literal, so we simply define it to be itself.
The important distinction is that $next$ gives an ordering
of the literals in a formula, and $follows$ orders the
literal in a clause.  When $F$ is understood, we will not
mention $F$ in $next$ and $follows$.

The algorithm that finds the assignment works in stages.  At
the beginning of
stage $i$, we have an assignment that satisfies the first
$i-1$ clauses.  Then, in the $i$th stage, we make local
changes to this
assignment to satisfy the $i$th clause as well.
At a high level, to satisfy the $i$th clause, we start with
the first literal in the $i$th clause, and assign that literal to
true.  The clause that contains this literal's negation may
be have gone from being satisfied to being unsatisfied.  So
we now go to the next literal, which is in this other clause.
We continue this until we get to a point where we know the
other clause is satisfied.  We need to be able to do this in
$L$.  Algorithm \ref{algo:cnf2} shows how to do this.
\begin{algorithm}
\caption{Algorithm for Stage $i$}
\label{algo:cnf2}
\begin{algorithmic}
\STATE Set $l_1$ to the first literal in clause $i$.
\REPEAT
\STATE Assign true to $l_1$.
\STATE set $l_2 := next(l_1)$
\WHILE{ $l_2$ is not the complement of $l_1$ }
\STATE Assign true to $l_2$
\STATE set $l_2 := next(l_2)$
\STATE If $l_2$ is a pure literal, assign true to $l_2$, and stage $i$ is done.
\STATE If $l_1$ and $l_2$ are in the same clause, stage $i$ is done.
\ENDWHILE
\STATE Assign true to $l_1$. \COMMENT{This statement is redundant, but it is included to emphasis that $l_1$ is true.}
\STATE set $l_1 := next(l_1)$
\UNTIL{$l_1$ is the first literal in clause $i$}
\STATE At this point we know the formula is unsatisfiable.
\end{algorithmic}
\end{algorithm}
At any point in the algorithm, the only information we need
are the values of $l_1$ and $l_2$, so this is in $L$.  Note
that we do not store the assignment on the work tape, but on
a write-only, output tape.
What is not obvious is why this algorithm
works.

The next lemma can be used to show that the both loops will
eventually finish.
\begin{lem}
For all literals $l$, there exists a $t>0$ such that after $t$
applications of $next$ to $l$, we get to $l$ or a pure literal.
\end{lem}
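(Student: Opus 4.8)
The plan is to analyze the functional graph of $next$ on the finite set $D$ of literals occurring in $F$, using the fact --- already noted --- that in the reduced $CNF(2)$ formula $F$ no literal occurs more than once. First I would dispose of the boundary case: if $l$ is a pure literal (or does not occur in $F$ at all), then $next(l) = l$ by definition, so $t = 1$ suffices and we have reached $l$, which is pure. So from here on assume $\overline{l}$ occurs in $F$; then every iterate $l_0 = l$, $l_{k+1} = next(l_k)$, up to the first one that happens to be pure, again occurs in $F$ and is non-pure, since $next$ moves to a neighbour inside the clause containing the complement of its argument.

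The key step I would establish first is that $next$ is injective on non-pure literals: if $l_1, l_2$ are non-pure and $next(l_1) = next(l_2) = m$, then $l_1 = l_2$. Unfolding the hypotheses gives $follows(m, \overline{l_1}, F)$ and $follows(m, \overline{l_2}, F)$; since $m$ occurs in a unique clause $C$, both $\overline{l_1}$ and $\overline{l_2}$ lie in $C$ and are the literal cyclically immediately preceding $m$ there, which is unique (the one-literal clause case, where a literal follows itself, is no exception), so $\overline{l_1} = \overline{l_2}$.

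Granting this, I would run the usual orbit argument on $l_0, l_1, l_2, \dots$. If some $l_k$ with $k \ge 1$ is pure we are done. Otherwise all the $l_k$ are non-pure, so by finiteness of $D$ there is a least $j \ge 1$ with $l_j = l_i$ for some $i < j$, and it is enough to show $i = 0$, since then $l_j = l$ and we may take $t = j$. If instead $i \ge 1$, then $l_{i-1}$ and $l_{j-1}$ are non-pure with $next(l_{i-1}) = l_i = l_j = next(l_{j-1})$, so injectivity forces $l_{i-1} = l_{j-1}$; but $i-1 < j-1 < j$ with $j - 1 \ge 1$ contradicts the minimality of $j$.

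The main --- though minor --- obstacle to watch is that $next$ is \emph{not} injective on all of $D$: a pure literal and a non-pure literal may share an image, so injectivity must be confined to non-pure literals, and one must check that the orbit stays non-pure up to the moment it first meets a pure literal before applying it. For the formalization in $\vl$ needed in Section~6, I would note that $k \mapsto l_k$ is computed by $p$-bounded number recursion and hence is an $\lfl$ function, the first repeat is a bounded $\sob(\lfl)$ minimization, and injectivity is a $\sob$ fact about the encoding of $F$, so the whole argument is available in $\vl$.
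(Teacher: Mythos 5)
Your argument is correct and is essentially the paper's own proof: both use finiteness to find the first repetition in the orbit of $next$ and then use injectivity of $next$ on non-pure literals to push the repetition back to index $0$, concluding the orbit is a cycle through $l$ unless a pure literal is hit. You merely make explicit what the paper asserts in one clause (that $next$ is one-to-one off the pure literals, which follows from each literal occurring in a unique clause of the $CNF(2)$ formula), which is a welcome but not substantively different elaboration.
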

\begin{proof}
Let $next^0(l) = l$ and $next^{t+1}(l) = next(next^{t}(l))$.
Since $next$ has a finite range, there exist a minimum $i$
and $t$ such that $next^i(l) = next^{i+t}(l)$.  Suppose this is not a pure literal.  If $i > 0$,
then $next(next^{i-1}(l)) = next( next^{i+t-1}(l))$.
However, this implies $next^{i-1}(l) = next^{i+t-1}(l)$
since $next$ is one-to-one when not dealing with pure literals.  This violates our choice of
$i$. Therefore $i=0$, and $l= next^0(l) = next^t(l)$.
\end{proof}
The implies the inner loop will halt, because, if it does
not end earlier, $l_2$ will eventually equal $l_1$ which
both will be in the same clause.  For the outer loop, if the
algorithm does not halt for any other reason, $l_1$ will
eventually return to the first literal in the $i$th clause.

The next lemma plays a small role in the proof of
correctness.
\begin{lem}
\label{lem:something}
Suppose the algorithm fails at stage $i$ and that
$next^t(l') = l$, where $l'$ is the first literal in clause
$i$.  Then, for every literal in the same clause as $l$,
there is a $t'$ such that $next^{t'}(l')$ equals that
literal.
\end{lem}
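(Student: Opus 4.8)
The plan is to reduce the lemma to a short statement about two permutations of the literals of $F$. First I would fix the following combinatorial picture. For a literal $m$ occurring in $F$, let $g(m)$ be the literal immediately following $m$ in $m$'s clause, cyclically; this is well defined since no literal occurs twice in $F$, and $g$ is a permutation of the literals of $F$ that maps each clause to itself and acts on each clause $C$ as a single $|C|$-cycle. Let $\neg$ be complementation, an involution on the set of non-pure literals. Unwinding the definitions of $follows$ and $next$ one checks that $next(l)=g(\bar l)$ for every non-pure literal $l$; write this as $next=g\circ\neg$ on non-pure literals.

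Next I would extract two structural facts about the orbit $O=\{\,next^{s}(l'):s\ge 0\,\}$ of the first literal $l'$ of clause $i$, using that stage $i$ fails. \textbf{(i)} $O$ contains no pure literal, and $O$ is a $next$-cycle. Indeed, in a failing run neither the outer loop nor any inner loop of Algorithm \ref{algo:cnf2} ever reaches a pure literal, for the line ``If $l_2$ is a pure literal\dots stage $i$ is done'' would then fire (in particular $l'$ is not pure); hence the preceding finiteness lemma together with the injectivity of $next$ on non-pure literals gives a minimal $T\ge 1$ with $next^{T}(l')=l'$, and $l',next(l'),\dots,next^{T-1}(l')$ are distinct and exhaust $O$. \textbf{(ii)} $O$ is closed under complementation. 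In a failing run every pass of the inner loop must terminate through its guard, i.e.\ with $l_2$ equal to the complement of the current $l_1$, since the only other two exits (``$l_2$ is a pure literal'' and ``$l_1$ and $l_2$ are in the same clause'') both declare the stage done, and the inner loop always halts. Running the inner loop with $l_1=next^{k}(l')$ therefore shows $\overline{next^{k}(l')}$ occurs among the values $next^{k+1}(l'),next^{k+2}(l'),\dots$, hence lies in $O$; as $k$ ranges over $0,\dots,T-1$ this says $\bar m\in O$ for every $m\in O$, so $\neg(O)=O$.

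With (i) and (ii) in hand the argument is a one-line computation. On $O$ all the maps are defined, $next(O)=O$ by (i), and $next=g\circ\neg$ there, so $g(O)=g(\neg(O))=next(O)=O$: the set $O$ is $g$-invariant. Now let $C$ be the clause containing $l$. Since $g$ preserves clauses and is injective, $g(O\cap C)=O\cap C$; but $g$ restricted to $C$ is a single cycle on all of $C$, whose only invariant subsets are $\emptyset$ and $C$. Because $l\in O\cap C$, we get $O\cap C=C$, i.e.\ every literal in the same clause as $l$ belongs to $O$ and hence equals $next^{t'}(l')$ for some $t'$. All quantifiers here range over the literals and clauses of $F$, so the reasoning formalizes without difficulty in the theory used in the rest of this section.

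The step I expect to be the real work is \textbf{(ii)}: one must argue carefully from the control flow of Algorithm \ref{algo:cnf2} that in a \emph{failing} stage no inner loop ever exits via the pure-literal test or the same-clause test, so that it necessarily exits with $l_2=\overline{l_1}$. Once that bookkeeping is pinned down, the closure $\neg(O)=O$, and with it the whole lemma, follows immediately from the fact that $g$ acts as a single cycle on each clause, which is routine.
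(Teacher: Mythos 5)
Your proof is correct and follows essentially the same route as the paper's: both arguments rest on the observation that in a failing stage every inner loop must exit with $l_2 = \overline{l_1}$, so that $\overline{l}$ lies in the $next$-orbit of $l'$ and $next(\overline{l})$ is the literal following $l$ in its clause. Your permutation-theoretic packaging ($next = g\circ\neg$, closure of the orbit under $next$ and complementation, and the single-cycle action of $g$ on each clause) simply makes precise the paper's terser ``continually applying this argument'' step, while also supplying the implicit facts that the orbit is a cycle containing no pure literal.
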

\begin{proof}
To prove this lemma, we will show that there exists a $t'$
that equals the literal that follows $l$.  Then by
continually applying this argument, you get that every
literal in the clause is visited.

Let $l'$ be the first literal in the $i$th clause.  
Then, after going through the outer loop $t$ times, $l_1 =
l$.  Since the algorithm fails, the inner loop will finish
because $l_2 = \overline l_1$.  This means there is a $t'$
such that $next^{t'}(l') = \overline l$.  Then
$next^{t'+1}(l')$ is the literal that follows $l$.
\end{proof}

\begin{thm}
\label{thm:correctness1}
If the algorithm fails, the formula is unsatisfiable.
\end{thm}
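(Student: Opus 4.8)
The plan is to show that a failed run of the stage-$i$ loop exposes a sub-collection of clauses of $F$ that is already unsatisfiable, and then to invoke the two lemmas above to pin down the shape of that sub-collection. Write $l'$ for the first literal of clause $i$ and let $V=\{\,next^{t}(l')\,\}$ be the set of literals reachable from $l'$ by iterating $next$. First I would argue $V$ contains no pure literal: by the cycling lemma $V$ is finite, and if the stage failed then the outer loop ran $l_1$ through \emph{every} element of $V$; but for $l_1$ equal to a pure literal the inner loop sets $l_2:=next(l_1)=l_1$ and at once takes the ``$l_2$ is a pure literal'' exit, so the stage would not have failed. Since no literal of $V$ is pure, every $l\in V$ has $\overline l$ occurring in $F$, so $next(l)=follows(\overline l)$ lies in the clause containing $\overline l$; that clause is entirely visited by Lemma~\ref{lem:something}, whence $\overline l\in V$, i.e. $V$ is closed under complementation. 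Applying Lemma~\ref{lem:something} once more, $V$ is the union of a family $D_1,\dots,D_r$ of whole clauses of $F$; since no literal of $F$ occurs twice these clauses are pairwise disjoint as sets of literals, and $|V|=\sum_s|D_s|$ equals the length $k$ of the single $next$-cycle on $V$.

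It then suffices to show $D_1\wedge\cdots\wedge D_r$ is unsatisfiable, since a satisfying assignment for $F$ would in particular satisfy all of these clauses. Here I would walk the cycle $l'=l_0\to l_1\to\cdots\to l_{k-1}\to l_0$, using the reading of each step: $l_{j+1}=next(l_j)=follows(\overline{l_j})$ records that $\overline{l_j}$ and $l_{j+1}$ are consecutive, in the $follows$-order, inside one of the $D_s$. When every clause has size $2$ this is exactly the implication-graph successor relation of a $2$-CNF, the cycle runs through both $l'$ and $\overline{l'}$ by closure under complement, and the two arcs $l_0\to\cdots\to\overline{l'}$ and $\overline{l'}\to\cdots\to l_0$ force $l'$ to be simultaneously true and false under any satisfying assignment --- the standard criterion for unsatisfiability of a $2$-CNF. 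For the general $CNF(2)$ case the forcing along the cycle can only be interrupted at a clause of size $\ge 3$ that is satisfied by a literal other than the two the cycle points at; I would handle this by re-entering the cycle at that rescuing literal, using the ``$l_1,l_2$ in the same clause'' condition (which did \emph{not} fire, since the stage failed) to guarantee that no clause is revisited before the complementary literal appears, so the arcs remain long enough to close up and produce the contradiction.

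The step I expect to be the main obstacle is exactly this last combinatorial argument for clauses of size larger than two: extracting, from the ``no same-clause'' part of the failure hypothesis, precisely which literals along each arc of the $next$-cycle must be true under a hypothetical satisfying assignment, and organizing it as a clean forcing/closure argument. There is the additional constraint that the whole reasoning must be carried out in $\vl$ (bounded induction on cycle positions, together with simple arithmetic on the codes of literals and clauses), which is why I would keep the argument to iterating $next$, tracing the cycle, and the two lemmas rather than anything requiring unbounded search. Once $D_1\wedge\cdots\wedge D_r$ is known to be unsatisfiable, $F$ is unsatisfiable, and since every ingredient is logspace-checkable the argument formalizes in $\vl$, as needed later for the reflection principle.
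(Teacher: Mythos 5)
Your structural setup is sound and consistent with the paper: when stage $i$ fails, the $next$-orbit $V$ of the first literal of clause $i$ contains no pure literal, is closed under complementation (via Lemma \ref{lem:something}), and is a disjoint union of whole clauses. But the heart of the theorem --- that this union of clauses is unsatisfiable --- is exactly the step you leave open, and the ``forcing along the implication cycle'' you propose does not extend beyond the $2$-CNF case. For a clause of size $\ge 3$, knowing that the two literals the cycle points at are false forces nothing about the remaining literals, and ``re-entering the cycle at the rescuing literal'' is not an argument: the rescuing literal's complement sits in some other visited clause, and you have no mechanism to propagate falsity to it. So the proposal as written has a genuine gap at its main step, which you yourself flag.

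The paper closes this gap with a counting argument rather than a forcing argument, and it is worth seeing why counting is the right tool here. Assume a satisfying assignment $A$ exists and define $f$ mapping each variable to the (unique, since no literal occurs twice) clause that its $A$-true literal satisfies; $f$ is onto the set of all clauses. Then one shows by induction on the number of $next$-steps that the set $V^a$ of visited \emph{variables} is always strictly smaller than the set $W^a$ of clauses touching them: the base case gives one variable in two clauses, and each step that introduces a new variable must place its complement in a previously unvisited clause, since otherwise the ``$l_1$ and $l_2$ in the same clause'' exit would have fired and the stage would have succeeded. At $a=\size F$ the visited clauses are saturated (every variable of a visited clause is visited, by Lemma \ref{lem:something}), so $f$ restricted to $V^{\size F}$ would be onto the strictly larger set $W^{\size F}$, violating PHP. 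Note what this reveals about the failure configuration: the visited clauses outnumber the visited variables, i.e. their average size is below $2$, so unit clauses must be present --- a quantitative fact your cycle-forcing picture never produces. If you want to complete your route, you would in effect have to reprove this inequality, at which point you are doing the paper's proof; I would adopt the $f$ / $V^a$ / $W^a$ counting argument directly, which also formalizes cleanly in $\vl$ since the induction is on a number bounded by $\size F$.
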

\begin{proof}
This is proved by contradiction.  Let $F$ be a $CNF(2)$ formula
and $A$ be an assignment that satisfies it.
Assume that the algorithm fails.  From this we can defined a
function from the set of variables to the set of clauses as
follows:
$$f(i) = j \iff (x_i \in C_j^F \AND x_i \in A) \OR (\neg x_i
\in C_j^F \AND \neg x_j \in A ).$$
Informally, if $f(i)=j$ then clause $C_j$ is true because of
the variable $x_i$.  Since the formula is satisfied, this
function is onto the set of clauses.  Also, since $F$ is
$CNF(2)$, no literal appear more than once.  So $f$ is
indeed a function because if $f(i) = j$ and $f(i) = j'$ then
the literal $x_i$ or $\neg x_i$ is in both $C_j^F$ and
$C_{j'}^F$.

Now we will use the assumption that the algorithm fails to
find a way to restrict $f$ so that it violates the $PHP$.
Suppose the algorithm fails at stage $i$.  Let $l$ be first
literal in clause $i$.  We then define sets of variables
$V^a$ as follows:
$$V^a = \set{ x_n : \exists b < a~next^{b}(l) = x_n \OR
next^{b}(l) = \neg x_n }.$$
We also defined sets of clauses $W^t$ as follows:
$$W^a = \set{ C_n : \exists x \in V^a ( x \in C_n \OR \neg x \in
C_n) }.$$
Note that for a large enough $a$, say $\size{F}$, if $C_n$
is in $W^a$, then every variable that appears in $C_n$ is in
$W^a$ by \ref{lem:something}.
We show by induction on $a$ that $\size {V^a} < \size {W^a}$.

For $a = 1$, $\size {V^a} = 1$.  If $l$ is a pure literal or
$l$ and $\neg l$ are in the same clause,
then the algorithm would succeed.  Otherwise $\size{W^a} =
2$.

For the inductive case, suppose $\size {V^a} < \size {W^a}$.
Let $l' = next^{a+1}(l)$.  If $l'$ is not a new variable,
then $\size{V^{a+1}} = \size {V^a} < \size {W^a} =
\size{W^{a+1}}$.  If $l'$ is a new variable, then $\overline l'$
must be in a new clause.  For, if this was not the case, the
algorithm would succeed.  
To see this, let $l_1$ be the most recent literal in the same clause as $l'$.  We know $l_1$ is not $\overline l'$ since $l'$ is a new variable. 
Then eventually $l_2$ will become $next(l')$, which is in the same clause as $l_1$.
The inner loop will not end because $l_2$ becomes the complement of $l_1$ since that would mean $next(\overline l_1)$ is more recent.

This gives $\size{V^{a+1}} = \size
{V^a} + 1 < \size {W^a} + 1 = \size{W^{a+1}}$.

If we restrict $f$ to $V^{\size F}$, then $f$ is a function
from $V^{\size {F}}$ that is onto $W^{\size {F}}$ 
violating the $PHP$.
\end{proof}

\begin{thm}
\label{thm:correctness2}
If the algorithm succeeds, then, for all
$i$, the assignment after given at the end of stage $i$
satisfies the first $i$ clauses of $F$.
\end{thm}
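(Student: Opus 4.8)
The plan is to induct on the stage number $i$. For $i=1$ there is nothing to carry in; for $i>1$ the inductive hypothesis says that the assignment at the end of stage $i-1$, which is the assignment at the start of stage $i$, satisfies $C_1^F,\dots,C_{i-1}^F$. The only operation performed in stage $i$ is ``assign \emph{true} to a literal'', which flips that literal's complement to \emph{false}, so a previously satisfied clause can be broken only by making false the very literal that had been satisfying it; the whole argument amounts to bounding how far this damage can propagate, using the fact (established earlier) that $F$ is a $CNF(2)$ formula in which no literal occurs twice.

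The heart of the proof is a \emph{stage invariant}, proved by a sub-induction on the number of steps the walk of Algorithm~\ref{algo:cnf2} has taken within stage $i$: at every point, if $l$ is the literal just assigned \emph{true}, then every clause among $C_1^F,\dots,C_i^F$ other than the unique clause containing $\overline l$ is satisfied by the current assignment (and if $\overline l$ is a pure literal, or occurs only in a clause of index $>i$, then all of $C_1^F,\dots,C_i^F$ are satisfied). The base point is the first line of the stage: $l$ is the first literal of $C_i^F$, which becomes true, so $C_i^F$ is satisfied, while $C_1^F,\dots,C_{i-1}^F$ are untouched, hence satisfied by the inductive hypothesis. For the step, when the algorithm passes from $l$ to $l'=next(l)$ and sets $l'$ true: by the definitions of $next$ and $follows$, $l'$ lies in the clause containing $\overline l$ (the clause currently ``at risk''), so that clause becomes satisfied, and the new ``at risk'' clause is the one containing $\overline{l'}$. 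Because $next$ is one-to-one away from pure literals, the literals visited along the walk are pairwise distinct until the walk closes up, and because each literal of $F$ occurs at most once, the clause containing $\overline{l'}$ is either not yet touched in this stage (hence satisfied by the inductive hypothesis, or because $C_i^F$ was handled first) or already contains a literal we set true (hence already satisfied); either way the invariant is restored.

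It remains to combine the invariant with termination. By the lemma preceding Lemma~\ref{lem:something}, iterating $next$ from any literal eventually returns to that literal or reaches a pure literal, so when the algorithm succeeds both loops of Algorithm~\ref{algo:cnf2} halt; I then inspect the exit conditions. If the inner loop exits with $l_2=\overline{l_1}$, the clause ``at risk'' contains $\overline{l_2}=l_1$, which is true, so no first-$i$ clause is unsatisfied. If it exits because $l_2$ is a pure literal, $\overline{l_2}$ occurs in no clause and $l_2$ has just been set true, so again there is no ``at risk'' clause. If it exits because $l_1$ and $l_2$ share a clause, that clause contains the true literal $l_1$, and using Lemma~\ref{lem:something} one checks that continuing the walk would only revisit literals already made true, so again no first-$i$ clause is left broken. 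The outer-loop exit is analysed the same way. Hence at the end of stage $i$ all of $C_1^F,\dots,C_i^F$ are satisfied, closing the induction on $i$. (Clauses of index $>i$ may indeed be broken during stage $i$; this is harmless, since stage $i+1$ begins by rebuilding satisfaction of $C_{i+1}^F$ outright, and the statement concerns only the first $i$ clauses.)

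I expect the main obstacle to be the bookkeeping forced by the nested-loop structure of Algorithm~\ref{algo:cnf2}: there are three distinct inner-loop exit conditions and an outer loop that reinitialises $l_1$ via $next$, and in every case one must verify that the clause ``at risk'' at the moment of exit is actually already satisfied. A secondary obstacle, since this lemma is ultimately deployed inside $\vl$, is to control the bound-variable complexity: the walk after $t$ steps must be presented as an $\lfl$ function obtained by iterating the $\Sigma_0^B(\lfl)$-definable operation $next$, and the stage invariant must be a $\Sigma_0^B(\lfl)$ formula of $t$, so that the sub-induction on walk steps and the outer induction on $i$ are both instances of $\Sigma_0^B(\lfl)$-IND available in $\vl$.
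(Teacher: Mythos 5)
Your proof is correct in substance and reaches the theorem by a different per-stage analysis than the paper's. Both arguments are an outer induction on the stage number $i$; the difference lies in how a single stage is handled. You run a forward sub-induction along the walk, maintaining the invariant that at most one clause among $C_1^F,\dots,C_i^F$ --- the one containing the complement of the literal just set true --- can be unsatisfied, and then discharge that clause at each exit condition. The paper instead argues per clause and backwards: for each $n\le i+1$ it locates the \emph{last} moment at which $l_1$ or $l_2$ lies in $C_n^F$ and shows by contradiction that the literal visited there stays true (if its complement were later set true, the very next application of $next$ would re-enter $C_n^F$, contradicting maximality), with a separate case when the stage ends while $l_2$ is in $C_n^F$. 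Your invariant is arguably the more transparent explanation of why the walk works, and it is well suited to the $\Sigma_0^B(\lfl)$-IND formalisation you sketch; the paper's version avoids carrying an invariant at every step but pays with a case analysis at the last visit. Two details to repair. First, at the inner-loop exit $l_2=\overline{l_1}$ the at-risk clause contains $l_2=\overline{l_1}$ (which is \emph{false}), not $\overline{l_2}=l_1$, so it need not be satisfied at that instant; this is harmless only because that exit does not end the stage --- the outer loop sets $l_1:=next(l_1)$, which lands in that clause and is made true at the top of the next iteration, restoring your invariant. Second, your appeal to Lemma~\ref{lem:something} in the ``$l_1,l_2$ share a clause'' exit is unnecessary (that lemma is stated only for failing stages): since no literal occurs twice in $F$, the shared clause \emph{is} the at-risk clause, and it contains $l_1$, which remains true because the while-guard prevents $\overline{l_1}$ from ever being assigned inside the inner loop.
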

\begin{proof}
The proof is done by induction on $i$.  For $i=0$, the
statement holds since there are no clauses to satisfy.
As an induction hypothesis, suppose
the statement holds for $i$.  Then we will show if
the algorithm ever visits one of the literals in clause $n$, then that
clause is satisfied.  

Consider clause $n$, where $n\le i+1$.  Find the last point
in the algorithm that either $l_1$ or $l_2$ was in clause
$n$, and let $l$ be that literal.
First, it is possible that when the algorithm ends $l_2$ is in clause $n$.  If $l_2$ is a pure literal, then $l_2$ is set to true, satisfying the clause.  Otherwise, $l_1$ and $l_2$ are in the same clause.  In this case, $l_1$ is true since it was assigned true.  If $l_2$ ever became $\overline l_1$, the algorithm would exit the inner loop, so $\overline l_1$ could never have been assigned true.

Second, we consider the possibility that $l_2$ was not in clause $n$ when the algorithm ended.
Then we claim that
$l$ is true, and, therefore, clause $n$ is satisfied.
Suppose for a contradiction that it
is not.  Then at some later point $\overline l$ was assigned
true.  This could happen in one of three places.  First is
if $l_1 = l$ and we are at the beginning of the outer loop.
However, $l_2$ would be set to $next(\overline l)$ right
after, which is in clause $n$. This means we did not find
the last occurrence of a literal in clause $n$ as we should
have.  A similar argument can be used in the other two
places.
\end{proof}

We now turn to formalizing this algorithm.  For this, we
define an $\lfl$ function $f(i,t)$ that will return the
value of $l_1$ and $l_2$ after $t$ steps in stage $i$.  This
is done using number recursion.  In the following let
$f(c,t) = \pair{l_3,l_4}$:
\begin{equation*}
\begin{split}
f(i,0) = \pair{ l_1, l_2 } \iff & l_1 = \min \limits_l l \in
C_i^F \AND l_2 = next(l_1) \\
f(c,t+1) = \pair{ l_1, l_2 } \iff
     & \phi_1 \implies l_1 = next(l_3) \AND l_2 = next(l_1) \\
\AND &  \neg \phi_1 \AND \phi_2 \implies (l_1 = l_3 \AND l_2 = l_4) \\
\AND &  \neg \phi_1 \AND \neg \phi_2 \implies (l_1 = l_3 \AND l_2 = next(l_4) )
\end{split}
\end{equation*}
where
\begin{equation*}
\begin{split}
\phi_1 & \equiv l_3 = \overline l_4 \\
\phi_2 & \equiv (sameClause(l_3,l_4) \OR pureLiteral(l_4) ) \\
\end{split}
\end{equation*}
The formulas $\phi_1$ and $\phi_2$ are the conditions that
are used to recognize when the inner loop ends.  The first
formula is when the loop ends and we have to continue with
the outer loop.  The second formula is when the stage is
finished.  In the formula version, we do not stop if the
algorithm fails.  Instead we view the algorithm as failing
if after $\size F ^2$ steps, $\phi_2$ was never true.  We
use this value since $\size F$ is an upper bound on the
number of literals in $F$ and current state of the algorithm
is determined by a pair of literal.
In the following, any reference to time has the implicit
bound of $\size F^ 2$.

The final step is to extract the assignment.  
The assignment is done by finding the last time a variable
is assigned a value.  This means we must be able to determine when a variable is
assigned a value.  To do this, observe that a literal is
assigned true just before the $next$ function is applied
to that literal.  With this is mind we get the following:
\begin{equation*}
\begin{split}
Assigned(i, t, l) \iff \exists l', & f(i,t) = \pair{next(l), l'} \OR f(i,t) = \pair{l', next(l)}
\end{split}
\end{equation*}
So $Assigned(i, t, l)$ means that $l$ was assigned true
during the $t$th step of stage $i$.
Then we can get the assignment as follows:
\begin{equation*}
\begin{split}
l \in Assignment( i, F ) \iff & c = \max \limits_c \exists t~Assigned(c, t, l) \\
                          \AND & t = \max \limits_t Assigned(c, t, l) \\
                          \AND & c' = \max \limits_{c'} \exists t'~Assigned(c, t', \overline l) \\
                          \AND & t' = \max \limits_{t'} Assigned(c', t', \overline l) \\
                          \AND & (c > c' \OR (c=c' \AND t > t') )
\end{split}
\end{equation*}
The idea is the value of a variable is the last value that
was assigned to it.

The $\vl$ proof that this algorithm is correct is the
essentially the same as the proofs of Theorem
\ref{thm:correctness1} and Theorem \ref{thm:correctness2}, which can be formalized in $\vl$.  This
gives the following.

\begin{thm}
$\vl$ proves that, if the algorithm fails, the formula is unsatisfiable.
\end{thm}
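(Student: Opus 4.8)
The plan is to formalize, within $\vl$, the contrapositive of what the classical proof of Theorem~\ref{thm:correctness1} establishes: namely, if the formula $F$ has a satisfying assignment $A$, then the algorithm does not fail. Concretely, I would work in $\vl$, assume for contradiction that the algorithm fails at some stage $i$ (i.e.\ the $\lfl$ function tracking the state reaches the time bound $\size{F}^2$ without $\phi_2$ ever becoming true at stage $i$, and similarly the outer loop cycles back to the first literal of clause $i$), and also assume an assignment $A$ with $A \satisfies_0 F$ exists. From these two hypotheses I would derive a violation of the pigeonhole principle, which $\vl$ refutes since $VL \supseteq V^0$ and $V^0$ proves $PHP$ for $\sob$-defined maps (the relevant form of $PHP$ is a theorem of $V^0$, hence of $\vl$).

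First I would reconstruct in $\vl$ the map $f$ sending each variable index $n$ to the clause index $j$ that $x_n$ (or $\neg x_n$) satisfies under $A$; because $F$ is $CNF(2)$ with no repeated literal (which $\vl$ proves from the $\sqcnf$ condition, as noted in the text), $f$ is well-defined, and surjectivity of $f$ onto the clause set follows from $A \satisfies_0 F$. This $f$ is $\sob(\lfl)$-definable. Next I would define, using the $\lfl$ function $next$ and bounded iteration, the sets $V^a$ of variables visited in the first $a$ steps of stage $i$ and $W^a$ of clauses touched by those variables; these are again $\sob(\lfl)$. The heart of the $\vl$ argument is the induction showing $\size{V^a} < \size{W^a}$ for all $a$ up to $\size F$; this is a $\sob(\lfl)$-IND argument (available in $\vl$), whose inductive step uses exactly the observation in the classical proof: if $next^{a+1}(l)$ is a new variable then its complement sits in a new clause, for otherwise the inner loop would have terminated via $\phi_1$ contradicting failure. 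I would also reprove Lemma~\ref{lem:something} inside $\vl$ — that all literals in a clause touched by the walk are themselves visited — which is what makes $W^{\size F}$ ``closed'' enough that restricting $f$ to $V^{\size F}$ lands onto $W^{\size F}$. Restricting $f$ then yields a surjection from a set of size $< \size{W^{\size F}}$ onto $W^{\size F}$, contradicting $PHP$.

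The main obstacle I expect is bookkeeping the termination/failure condition correctly in the arithmetic setting: the classical proof speaks of ``the algorithm fails,'' but the $\lfl$ formalization replaces this with ``after $\size F^2$ steps of stage $i$, $\phi_2$ never held, and $l_1$ has returned to the first literal of clause $i$.'' I need to check in $\vl$ that this arithmetic condition really does supply, at each step $a \le \size F$, the fact that $next^{a+1}(l) = \overline{l_1}$ never triggers early termination — i.e.\ that the walk $l, next(l), next^2(l),\dots$ genuinely injects into $V^a$ in the way the size comparison needs, and does so within the allotted time bound. This is where the earlier lemma (that $next$ iterated on any literal returns to that literal or hits a pure literal, with $next$ injective off pure literals) is doing load-bearing work, and I would re-examine its proof to confirm it is $\sob(\lfl)$-IND–formalizable. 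The remaining steps — definability of $f$, $V^a$, $W^a$, and the application of $PHP$ — are routine once the induction $\size{V^a} < \size{W^a}$ is in place, since everything in sight is $\sob(\lfl)$ and $\vl$ proves $\sob(\lfl)$-IND and $PHP$.
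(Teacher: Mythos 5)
Your overall strategy is exactly the paper's: the paper's proof of this theorem consists of the remark that the argument for Theorem \ref{thm:correctness1} --- the map $f$ from variables to clauses induced by a hypothetical satisfying assignment, the sets $V^a$ and $W^a$, the induction $\size{V^a} < \size{W^a}$ supported by Lemma \ref{lem:something}, and the final appeal to $PHP$ --- ``can be formalized in $\vl$,'' and that is precisely what you propose to carry out, with appropriate attention to where $\sob(\lfl)$-IND and the $\lfl$-definability of $next$, $f$, $V^a$, $W^a$ enter.

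The one concrete error is your justification for the availability of $PHP$: $V^0$ does \emph{not} prove the pigeonhole principle for $\sob$-defined maps. That is essentially the Ajtai / Pitassi--Beame--Impagliazzo lower bound for bounded-depth Frege, and it is the standard example of a true universally quantified $\sob$ statement unprovable in $V^0$. The step survives for a different reason: counting is in $TC^0 \subseteq L$, so $VTC^0 \subseteq VL$, and $VTC^0$ proves the (surjective form of the) pigeonhole principle by comparing cardinalities; hence $\vl$, being a conservative extension of $VL$, has $PHP$ available. This is not an incidental point, since counting is already needed to state and run the induction $\size{V^a} < \size{W^a}$ --- the argument genuinely lives above $V^0$. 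With that citation corrected, your plan matches the intended proof.
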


\begin{thm}
$\vl$ proves that, if the algorithm succeeds, then, for all
$i$, $Assignment(i, F)$ gives a
satisfying assignment to the first $i$ clauses of $F$.
\end{thm}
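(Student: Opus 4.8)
The plan is to formalize in $\vl$ the combinatorial argument behind Theorem~\ref{thm:correctness2}, using the $\lfl$-definitions of $f(i,t)$, $Assigned$, and $Assignment$ given above. First I would observe that, since $f$, $next$, $follows$, the clause extraction $C_n^F$, $pureLiteral$, $sameClause$, and hence $Assigned$ and $Assignment$ are all $\lfl$ functions and $\sob(\lfl)$ predicates, the statement
$$S(i) \same \forall n~(1 \le n \le i \implies \exists l~(l \in C_n^F \AND l \in Assignment(i,F)))$$
is a $\sob(\lfl)$ formula in $i$. Since $\vl$ proves $\sob(\lfl)$-IND (as already used in this section), it suffices to prove $S(0)$ and $S(i) \implies S(i+1)$ in $\vl$, under the hypothesis that the algorithm succeeds --- which in the formalized setting says that for every stage $c$ the condition $\phi_2$ holds at some time $\le \size{F}^2$ of the stage-$c$ computation, so in particular each stage among $0,\dots,i+1$ ends via $\phi_2$. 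The base case $S(0)$ is immediate, as there are no clauses.

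For the inductive step the heart of the matter is a single sub-claim, the $\vl$-version of the key paragraph in the proof of Theorem~\ref{thm:correctness2}: if clause $n$ with $n \le i+1$ is \emph{visited} --- meaning some literal of $C_n^F$ occurs as a component of $f(c,t)$ for some stage $c \le i+1$ and time $t \le \size{F}^2$ --- then the literal $l$ of $C_n^F$ occurring at the lexicographically-last such pair $(c,t)$ satisfies $l \in Assignment(i+1,F)$. Both that last pair and the literal $l$ are extracted by the $\min/\max$ schemes of $\lfl$, so $l$ is an $\lfl$ function of $n$, and the sub-claim is proved by reproducing the case split of Theorem~\ref{thm:correctness2}: the boundary case in which $l$ is the $l_2$-component at the final step of stage $i+1$ (treated according to whether $l$ is a pure literal or $l_1$ and $l_2$ lie in the same clause $C_n^F$), and the main case, in which one assumes $\overline l \in Assignment(i+1,F)$ for contradiction, locates the later step at which $\overline l$ is assigned true, and then uses the defining recursion for $f$ together with the facts that $next(m)$ always lies in the clause of $m$ and that $next$ is injective off pure literals to produce an occurrence of a literal of $C_n^F$ strictly after $(c,t)$, contradicting maximality. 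All of this is bounded reasoning about the $\lfl$ functions and needs no further induction. To finish $S(i+1)$ I would note that clause $i+1$ is visited because stage $i+1$ is initialized with its first literal as $l_1$, and that each clause $n \le i$ is visited because $S(i)$ supplies a literal of $C_n^F$ lying in $Assignment(i,F)$, which by the definitions of $Assignment$ and $Assigned$ was assigned true --- hence occurred as a component of $f$ --- at some stage $\le i$; the sub-claim then applies to every $n \le i+1$, giving $S(i+1)$.

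I expect the main obstacle to be the stage-boundary bookkeeping: carefully connecting the predicate $Assigned(c,t,l)$ (which speaks of $next(l)$ being a component of $f(c,t)$) with ``$l$ occurs as a component of $f$'', handling the re-initialization $f(c,0)$ at the start of each stage, and checking that the ``locate the last (resp.\ later) occurrence'' arguments remain sound when occurrences may straddle several stages. Bundled with this is the routine but unavoidable task of verifying, branch by branch, that the reasoning in each case of the split is genuinely $\sob(\lfl)$ and hence admissible in $\vl$. None of this is conceptually difficult, but it is precisely the level of detail that the informal proofs of Theorems~\ref{thm:correctness1} and~\ref{thm:correctness2} suppress and that a $\vl$-formalization must make explicit.
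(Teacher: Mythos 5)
Your proposal is correct and follows essentially the same route as the paper: the paper's own justification of this theorem is the single remark that the informal argument of Theorem~\ref{thm:correctness2} (induction on $i$, with the ``last visit to clause $n$'' case analysis) formalizes in $\vl$, and your plan is exactly that formalization, with the $\sob(\lfl)$ induction hypothesis and the visited-clause sub-claim made explicit.
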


\subsection{Witnessing $GL^*$ Proofs}

Let $\pi$ be a $GL^*$ proof of a $\Sigma_1^q$ formula
$\exists \vec z P(\vec x, \vec z)$, and let $A$ be an
assignment to the parameter variables.
We assume $\pi$ is in free variable normal form (Definition \ref{def:fvnf}).

% To prove the soundness of $GL^*$, a function that
% witnesses every sequent in $\pi$ will be defined. Then
% we can prove by induction that every sequent is indeed
% witnessed.

Let $\Gamma_i \fCenter \Delta_i$ be the $i$th sequent in
$\pi$.  We will prove by induction that for any assignment
to all of the free variables of $\Gamma_i$ and $\Delta_i$,
a function $Wit(i,\pi,A)$ will find at least one formula that satisfies
the sequent.

% \begin{defn}
% A sequent $\Gamma \fCenter \Delta$ is
% \textit{witnessed} by a function $Wit$, if there exists a
% $\Sigma_1^q$ formula in $\Delta$ that is witnessed
% by $Wit$, whenever every formula in $\Gamma$ is true,
% every
% $\Sigma_0^q$ formula in $\Delta$ is false, and $Wit$ is
% given an assignment to all of the free variables in the
% sequent.
% \end{defn}

There are two things to note.  By the subformula property, every formula in $\Gamma_i$
is $\sqcnf$, which means it can be evaluated.  Also, we need
an
assignment that gives appropriate values to the
non-parameter free variables that could appear.  To take
care of this second point, we extend $A$ to an assignment
$A'$ as follows:
\begin{algorithmic}[1]
\STATE Given a non-parameter free variable $y$,
find the $\exists$-left inference in $\pi$ that uses $y$
as an eigenvariable.  Let $z$ be the new bound variable
and let
$F$ be the principal formula.
\STATE Find the descendant of $F$ that is used as a cut
formula.  Let $F'$ be the cut formula.  Note
that $F$ is a subformula of $F'$, and, because of the
variable restriction on cut formulas,
every free variable in $F'$ is a parameter
variable.
\STATE Assign $y$ the value that $Assignment(F',A)$
assigns $z$.
\end{algorithmic}
The reason for this particular assignment will become
evident in the
proof of Lemma \ref{thm_wit}.

We can now define $Wit(i,\pi,A')$, which witnesses $\Gamma_i
\fCenter \Delta_i$.  $Wit$ will go through each formula in
the sequent to find a formula that satisfies the sequent.
$\sqcnf$ formulas are evaluated using the algorithm
described in the previous section.  We will now focus
our attention on other $\Sigma_1^q$ formulas, which must
appear in $\Delta_i$.  Each
$\Sigma_1^q$ formula $F \same \exists \vec z F^*(\vec z)$ in
$\Delta$ is evaluated by finding a witness to the
quantifiers as follows:
\begin{algorithmic}[1]
% \IF{ $F$ is $\sqcnf$ }
% \STATE the witness for $F$ is $W(F,A')$
% \ELSE
\STATE Find a formula $F'$ in $\pi$ that is an ancestor
of $F$, is satisfied by $A'$, and is a $\Sigma_0^q$ formula
of the form $F^*( z_1/B_1, \dots, z_n/B_n ),$ where each
$B_i$ is $\Sigma_0^q$
\STATE $z_i$ is assigned $\true$ if $A'$ satisfies $B_i$,
otherwise it is assigned $\false$
\STATE if no such $F'$ exists, then every bound variable is
assigned $\false$.
% \ENDIF
\end{algorithmic}

\begin{lem}
\label{thm_wit}
For every sequent $\Gamma_i \fCenter \Delta_i$ in $\pi$,
$Wit(i,\pi, A')$ finds a false formula in $\Gamma_i$ or a
witness for a formula in $\Delta_i$.
\end{lem}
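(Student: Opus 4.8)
The plan is to prove Lemma \ref{thm_wit} by induction on $i$, the position of the sequent $\Gamma_i \fCenter \Delta_i$ in the proof $\pi$. For the base case, $\Gamma_i \fCenter \Delta_i$ is an initial sequent: one of $\fCenter \true$, $\false \fCenter$, or $x \fCenter x$. In each case the claim is immediate --- $Wit$ simply evaluates the (quantifier-free) formulas with the assignment $A'$ and reports $\true$ on the right, $\false$ on the left, or the matching truth value of $x$. For the inductive step, I would assume $Wit$ works on all sequents occurring earlier in $\pi$ and argue rule by rule that it works on $\Gamma_i \fCenter \Delta_i$; since $\pi$ is treelike and in free-variable normal form, each sequent is obtained from at most two earlier sequents by one rule.

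For the structural and weakening/exchange/contraction rules, the witness for the conclusion is read off directly from the witness(es) for the premise(s): a false formula in the premise's antecedent is (a descendant of) a false formula in the conclusion's antecedent, and similarly on the succedent side, so $Wit$ can be defined to track this. The interesting cases are the quantifier rules and $cut$. For $\exists$-right, deriving $\Gamma \fCenter \Delta, \exists z A(z)$ from $\Gamma \fCenter \Delta, A(B)$, if $Wit$ on the premise picked a formula in $\Gamma$ or in $\Delta$ we are done, and if it picked $A(B)$ then $A(B)$ is $A^*(z/B)$ with $B$ a $\Sigma_0^q$ formula, so $\exists z A(z)$ has an ancestor of exactly the form demanded by step 1 of the $\Sigma_1^q$-evaluation procedure and the value $A' \satisfies_0 B$ gives the witness; this is precisely why the witnessing procedure searches for an ancestor of $F$ that is a $\Sigma_0^q$ instance. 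For $\exists$-left, deriving $\exists z A(z), \Gamma \fCenter \Delta$ from $A(y), \Gamma \fCenter \Delta$ with eigenvariable $y$: here the extension $A'$ of $A$ is designed so that $A'(y)$ is exactly the value assigned to $z$ by $Assignment(F', A)$ where $F'$ is the cut formula that is a descendant of $\exists z A(z)$; the key point is that $F'$ is a true $\sqcnf$ formula (by the inductive reasoning on the branch leading to the cut, combined with the correctness theorem for the $\sqcnf$ algorithm), so $Assignment(F', A)$ genuinely witnesses it, hence $A' \satisfies_0 A(y)$, and then applying $Wit$ on the premise under $A'$ gives a false formula among $A(y), \Gamma$ --- but it cannot be $A(y)$, so it lies in $\Gamma$, and $\Gamma$'s formulas are unchanged. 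The $\forall$-left/$\forall$-right cases are symmetric and simpler since in a $\Sigma_1^q$-provable setting the relevant $\forall$ formulas only occur as ancestors of final formulas.

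The $cut$ case is where the two defining restrictions of $GL^*$ do the work. We derive $\Gamma \fCenter \Delta$ from $A, \Gamma \fCenter \Delta$ and $\Gamma \fCenter \Delta, A$ where $A$ is $\sqcnf$. Since $A$ is $\sqcnf$ it can be evaluated by the $L$-algorithm; by the correctness theorems (Theorem \ref{thm:correctness1}, Theorem \ref{thm:correctness2}, in their $\vl$-formalized versions) the algorithm either produces a satisfying assignment or certifies unsatisfiability. If $A$ evaluates true under $A'$, apply the inductive hypothesis to the left premise: $Wit$ finds a false formula in $A, \Gamma$ or a witness in $\Delta$; the false formula cannot be $A$, so it is in $\Gamma$, and we are done. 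If $A$ evaluates false under $A'$, apply the inductive hypothesis to the right premise: $Wit$ finds a false formula in $\Gamma$ or a witness for a formula in $\Delta, A$; it cannot be a witness for $A$ (as $A$ is false), so it is a false formula in $\Gamma$ or a witness in $\Delta$. The variable restriction is what makes the extended assignment $A'$ well defined and consistent: because every non-$\Sigma_0^q$ cut formula $F'$ has only parameter variables free, $Assignment(F', A)$ depends only on $A$ (not on eigenvariables assigned later), so the three-step construction of $A'$ assigns each non-parameter eigenvariable exactly once with no circularity, using free-variable normal form to guarantee each such $y$ has a unique defining $\exists$-left inference.

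The main obstacle I expect is the $\exists$-left case: one must verify that the particular value $A'$ assigns to the eigenvariable $y$ --- namely what $Assignment(F', A)$ gives $z$ --- is actually correct, i.e. makes $A'(y)$ satisfy $A(y)$ in the antecedent. This requires knowing that the descendant cut formula $F'$ is true under $A'$ before we know $Wit$ succeeds on this sequent, so the induction has to be set up so that the truth of $F'$ on the relevant branch is available; concretely, one argues along the path from $\Gamma_i \fCenter \Delta_i$ down to the cut on $F'$ that if any antecedent formula on that path were falsified we would already be finished, which forces $F'$ true and lets the $\sqcnf$-witnessing correctness theorem apply. Making this interleaving of the induction with the correctness of the $\sqcnf$ algorithm precise --- and checking it all goes through with $\Sigma_0^B(\lfl)$-IND so it formalizes in $\vl$ --- is the technical heart of the argument.
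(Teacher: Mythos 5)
Your proposal is correct and follows essentially the same route as the paper: induction on the sequents of $\pi$ with a case analysis on the rules, handling cut by evaluating the $\sqcnf$ cut formula and choosing the appropriate premise, $\exists$-right via the $\Sigma_0^q$ ancestor $F^*(\vec z/\vec B)$, and $\exists$-left via the extension $A'$ built from $Assignment(F',A)$ on the descendant cut formula, with the free-variable restriction guaranteeing $A'$ is well defined. The subtle point you flag about needing the truth of $F'$ in the $\exists$-left case is exactly the point the paper's proof also leans on (it discharges it by assuming all antecedent formulas of the current sequent are true, so $\exists z F(z)$ and hence its descendant cut formula are satisfied).
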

\begin{proof}
We prove the theorem by induction on the depth of the
sequent.  For the base case, the sequent is an axiom, and
the theorem obviously holds.  For the inductive step, we
need to look at each rule.  We can ignore $\forall$-left
and $\forall$-right since universal quantifiers do not
appear in $\pi$.

We will now assume all formulas in $\Gamma_i$ are true and
all $\sqcnf$ formulas in $\Delta_i$ as false.  So we need
to find a $\Sigma_1^q$ formula in $\Delta_i$ that is true.

Consider cut.  Suppose the inference is
\begin{prooftree}
\AX $F,\Gamma \fCenter \Delta$
\AX $\Gamma \fCenter \Delta, F$
\BI $\Gamma \fCenter \Delta$
\end{prooftree}
First suppose $F$ is true.  By induction, with the upper
left
sequent, $Wit$ witnesses one of the formulas in $\Delta$.
Then the corresponding formula in the bottom sequent is
witnessed by $Wit$.  This is because the ancestor of the
formula in the upper sequent that gives the witness is also
an ancestor of the corresponding formula in the lower
sequent.  If $F$ is false, it cannot be the formula that
was witnessed in the
upper
right sequent, and a similar argument can be made.

Consider $\exists$-right.  Suppose the inference is
\begin{prooftree}
\AX $\Gamma \fCenter \Delta, F(B)$
\UI $\Gamma \fCenter \Delta, \exists zF(z)$
\end{prooftree}

First suppose $F(B)$ is $\Sigma_0^q$.  If it is false, we
can apply the inductive hypothesis, and, by an argument
similar to the previous case, prove one of the formulas in
$\Delta$ must be witnessed.  If $F(B)$ is true, then $Wit$
will witness $\exists z F(z)$ since $F(B)$ is the
ancestor that gives the witness.  If $F(B)$ is not
$\Sigma_0^q$, then we can apply the inductive hypothesis,
and, by the same argument, find a formula that is witnessed.

The last rule we will look at is $\exists$-left. Suppose the
inference is
\begin{prooftree}
\AX $F(y), \Gamma \fCenter \Delta$
\UI $ \exists zF(z), \Gamma \fCenter \Delta$
\end{prooftree}
To be able to apply the inductive hypothesis, we need to be
sure that $F(y)$ is satisfied.  If $\exists z F(z)$ it true,
then we know $F(y)$ is satisfied by the construction of
$A'$: the value assigned to $y$ is chosen
to satisfy $F(y)$ if it is possible.  Otherwise, $\exists
z F(z)$ is false, and we do not need induction.

For the other rules the inductive hypothesis can be applied
directly and the witness found as in the previous cases.
\end{proof}

\begin{thm}
$\vl$ proves $GL^*$ is sound for proofs of $\Sigma_1^q$
formulas.
\end{thm}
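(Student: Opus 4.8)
The statement to prove is the reflection principle $\Sigma_1^q\text{-RFN}(GL^*)$: for every prenex $\Sigma_1^q$ formula $F \same \exists \vec z\, P(\vec x, \vec z)$, every $GL^*$ proof $\pi$ of $F$, and every assignment $A$ to the variables of $F$, we have $A \satisfies_1 F$; that is, there is a witness $\vec w$ for $\vec z$ with $A \cup \vec w \satisfies_0 P$. The plan is to produce this witness uniformly as the value of an $\lfl$ function and to verify its correctness by $\sob(\lfl)$-IND inside $\vl$, using the function $Wit$ constructed above together with Lemma~\ref{thm_wit}. Since the theorem concerns $\Sigma_1^q$ formulas, the variables of $F$ are precisely the parameter variables of $\pi$, so $A$ assigns exactly the parameter variables.

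First I would put $\pi$ into free-variable normal form (Definition~\ref{def:fvnf}); this is only a renaming of eigenvariables and bound variables, a low-complexity string operation available in $\vl$. Next I would build the extended assignment $A'$: for each non-parameter free variable $y$ of $\pi$, locate the unique $\exists$-left inference that uses $y$ as eigenvariable, find the descendant cut formula $F'$ (a subformula of which is the principal formula of that inference), and set $y$ to the value that $Assignment(F', A)$ gives to the matching bound variable. By the eigenvariable restriction defining $GL^*$, every free variable of a non-$\sob$ cut formula is a parameter variable, so $Assignment(F',A)$ depends only on $A$; hence $A'$ is $\lfl$-definable from $A$ and $\pi$, via the $CNF(2)$-witnessing algorithm and the correctness theorems for it proved above in $\vl$. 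Likewise $Wit(i,\pi,A')$ is an $\lfl$ function: the subformula property (provable in $VL$) guarantees every formula in $\Gamma_i$ is $\sqcnf$ and so can be evaluated by that $L$-algorithm, parsing the proof is in $TC^0 \subseteq FL$, and the remaining bookkeeping --- scanning ancestors of a $\Sigma_1^q$ succedent formula for a satisfied $\Sigma_0^q$ instance and reading off its witness --- is $FL$-computable.

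The heart of the argument is Lemma~\ref{thm_wit}, which I would prove in $\vl$ by $\sob(\lfl)$-IND on the depth of sequent $i$ in $\pi$. The crucial observation is that the statement ``$Wit(i,\pi,A')$ finds a false formula in $\Gamma_i$ or a witness for a formula in $\Delta_i$'' is $\sob(\lfl)$: it asserts the existence of an index, bounded by $|\pi|$, of a formula in the sequent, together with the assertion that the $\lfl$ evaluation algorithm returns the appropriate truth value (for a $\sqcnf$ formula) or that the proposed assignment satisfies the quantifier-free matrix (for a $\Sigma_1^q$ formula) --- all $\sob(\lfl)$. The inductive step is the case analysis sketched in the proof of Lemma~\ref{thm_wit}: the structural rules pass a witness through an ancestor, and cut, $\exists$-right, and $\exists$-left are handled as there. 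Once the lemma is established in $\vl$, instantiating it at the final sequent $\fCenter F$ --- whose antecedent is empty, so no false antecedent formula can be reported --- forces $Wit$ to return a witness for $F$, and by construction that witness satisfies $P$ under $A$. Hence $A \satisfies_1 F$, which is what reflection asks for. Since $\vl$ is a conservative extension of $VL$, this also yields the soundness statement over $VL$.

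The main obstacle is not the case analysis but keeping the whole construction inside the complexity bounds that make the induction legitimate in $\vl$: $Wit$ and $A'$ must genuinely be $\lfl$ functions --- which rests on the verified $L$-algorithm for evaluating $\sqcnf$ formulas and on the subformula property --- and the witnessing predicate must be $\sob(\lfl)$ rather than truly $\Sigma_1^B$, since only then is $\sob(\lfl)$-IND available. Within the case analysis the delicate point is $\exists$-left: applying the induction hypothesis to $F(y), \Gamma \fCenter \Delta$ requires $F(y)$ to be true whenever $\exists z\,F(z)$ is, and this is exactly why $A'$ was defined through the $Assignment$ of the governing cut formula --- and why that cut formula having only parameter free variables (the $GL^*$ restriction) is indispensable, so that its $Assignment$ can be computed from $A$ alone.
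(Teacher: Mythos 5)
Your proposal matches the paper's argument: both formalize the witnessing function $Wit$ and the extended assignment $A'$ as $\lfl$ functions (relying on the verified $L$-algorithm for $\sqcnf$ formulas and the subformula property), observe that the statement of Lemma~\ref{thm_wit} is $\sob(\lfl)$ so that $\sob(\lfl)$-IND applies, and conclude by instantiating at the final sequent. The additional detail you supply (why the $GL^*$ variable restriction makes $A'$ computable from $A$ alone, and why the $\exists$-left case forces the particular definition of $A'$) is exactly the content the paper develops in the lemmas preceding its two-sentence proof.
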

\begin{proof}
The functions $Assignment$ and $Wit$ are in $FL$ and can
be formalized in $\vl$.  A function that finds $A'$, given
$A$, can also be formalized since it in $\vl$.  The
final thing to note is that the proof of Lemma
\ref{thm_wit} can be formalized in $\vl$ since the
induction hypothesis can be express as a $\sob(\lfl)$
formula and the induction carried out.
\end{proof}

The reason this proof does not work for a larger
proof system, say $G_1^*$, is because $Assignment$ cannot be
formalized for the larger class of cut formulas.  Also, if
the variable restriction was not present, we would not be
able to find $A'$ in $L$, and the proof would, once
again, break down.

% ----------------------------------------------------------------
\bibliography{gl}
\bibliographystyle{plain}
% ----------------------------------------------------------------
\end{document}